\newcommand{\w}{\mathrm{w}}
\renewcommand{\v}{\mathrm{v}}
\newcommand{\R}{\mathbb{R}}
 \newcommand{\sgn}{\operatorname{sign}}
\newcommand{\Sset}{(T\cap\widetilde{T}^c)\cup(T^c\cap\widetilde{T})}
\newtheorem{thm}{Theorem}[]
\newtheorem{lemma}[thm]{Lemma}
\newtheorem{prop}[thm]{Proposition}
\newtheorem{cor}[thm]{Corollary}
\theoremstyle{remark}
\newtheorem{rem}{Remark}[thm]
\theoremstyle{definition}
\newtheorem{defn}[thm]{Definition}
\newcommand{\supp}{\text{supp}}
\theoremstyle{definition}
\newtheorem*{rem*}{Remark}
\begin{document}
\onehalfspacing
%
\title{Recovery Analysis for Weighted $\ell_1$-Minimization Using a Null Space Property} 
\author{
 Hassan Mansour\thanks{Hassan Mansour is with
Mitsubishi Electric Research Laboratories, Cambridge, MA 02139 (mansour@merl.com).}, Rayan Saab\thanks{Rayan Saab is with the Department of Mathematics, The University of California, San Diego (rsaab@ucsd.edu).} 
  }

\maketitle
\begin{abstract}
We study the recovery of sparse signals from underdetermined linear measurements when a potentially erroneous support estimate is available. Our results are twofold. First, we derive necessary and sufficient conditions for signal recovery from compressively sampled measurements using weighted $\ell_1$-norm minimization. These conditions, which depend on the choice of weights as well as the size and accuracy of the support estimate, are on the null space of the measurement matrix. They can guarantee recovery even when standard $\ell_1$ minimization fails. Second, we derive bounds on the number of Gaussian measurements for these conditions to be satisfied, i.e., for weighted $\ell_1$ minimization to successfully recover all sparse signals whose support has been estimated sufficiently accurately. Our bounds show that weighted $\ell_1$ minimization requires significantly fewer measurements than standard $\ell_1$ minimization when the support estimate is relatively accurate.

\end{abstract}

\section{Introduction}

The application of $\ell_1$ norm minimization for the recovery of sparse signals from incomplete measurements has become standard practice since the advent of  compressed sensing \cite{Donoho2006_CS, CRT05, CRT06}. Consider an arbitrary $k$-sparse signal $x \in \R^N$ and its corresponding linear measurements $y \in \R^m$ with $m < N$, where $y$ results from the underdetermined system  
\begin{equation}\label{eq:yAx}
	y = Ax.
\end{equation} 
It is possible to exactly recover all such sparse $x$ from $y$ by solving the $\ell_1$ minimization problem
\begin{equation}\label{eq:L1_min}
	\min_{z}\ \|z\|_{1}\ \text{subject to}\ y = Az
\end{equation}
 if $A$ satisfies certain conditions  \cite{Donoho2006_CS, CRT05, CRT06}. In particular, these conditions are satisfied with high probability by many classes of random matrices, including those whose entries are i.i.d. Gaussian random variables, when\footnotemark[3]\footnotetext[3]{We write $x\gtrsim y$ when $x \geq Cy $ for some constant independent of $x$ and $y$.} $m \gtrsim k\log(N/k)$.

One property of the measurement matrix $A$ that characterizes sparse recovery from compressive measurements is the null space property (NSP) (see, e.g., \cite{cohen2006csa} ) defined below.
\begin{defn}{\cite{cohen2006csa}}\label{def:NSP}
A matrix $A \in \R^{m\times N}$ is said to have the null space property of order $k$ and constant $C$ if for any vector $h:Ah=0$, and for every index set $T \subset \{1\dots N\}$ with $|T| \leq k$, we have 
\[  \|h_T\|_1 \leq C \|h_{T^c}\|_1. \]
In this case, we say that $A$ satisfies NSP$(k,C)$.
\end{defn}
NSP$(k, C)$ with $C < 1$ is a necessary and sufficient condition on the matrix $A$ for the recovery of all $k$-sparse vectors from their measurements using \eqref{eq:L1_min}. For example, it was shown in \cite[Section 9.4]{FoucartRauhut:2013} using an \emph{escape through the mesh argument}\cite{EscapeMesh:1988} based on, e.g.,\cite{Rudelson07} (cf. \cite{Stojnic:2009,CRPW:2012}) that Gaussian random matrices satisfy the null space property with probability greater than $1-\epsilon$ when $m > c k\ln{eN/k}$. Here, $c$ depends on $C$ and $\epsilon$, but the dependence is mild enough that $c\approx 8$ is a reasonable approximation when $N$ is large and $k/N$ is small. 

While the $\ell_1$ minimization problem in \eqref{eq:L1_min} is suitable for recovering signals with arbitrary support sets, it is often the case in practice that signals exhibit structured support sets, or that an estimate of the support can be identified. In such cases, one is interested in modifying  \eqref{eq:L1_min} to weaken the exact recovery conditions. In this paper, we analyze a recovery method that incorporates support information by replacing \eqref{eq:L1_min} with weighted $\ell_1$ minimization. In particular, given a support estimate $\widetilde{T}\subset\{1,...,N\}$, we solve the optimization problem
\begin{equation}\label{eq:wL1_min}
  \min_{z} \sum_{i=1}^N w_i |z_i| \ \text{subject to}\ y = Az, \textrm{ where } w_i = \left\{ \begin{array}{cc}  \w \in [0,1], & i\in \widetilde{T} \\ 1, & i \in \widetilde{T}^c\end{array}\right. .
\end{equation}

The idea behind such a modification is to choose the weight vector such that the entries of $x$ that are ``expected'' to be large, i.e., those on the support estimate $\widetilde{T}$, are penalized less.
\subsection{Prior work}
The recovery of compressively sampled signals using prior support
information has been studied in several works, e.g., \cite{CS_using_PI_Borries:2007, Vaswani_Lu_Modified-CS:2010,
  Vaswani_ISIT:2009, reg_mod_BPDN_Vaswani:2010, Jacques:2010,
  WL1_min_Hassibi:2009, FMSY:2012, Mansour:SPIE11,OKH:2012, ISDWangYin:2010}. Vaswani and Lu \cite{Vaswani_Lu_Modified-CS:2010,
  Vaswani_ISIT:2009, reg_mod_BPDN_Vaswani:2010} proposed a modified
compressed sensing approach that incorporates known support
elements using a weighted $\ell_1$ minimization approach, with zero
weights on the known support. Their work derives sufficient recovery
conditions that are weaker than the analogous $\ell_1$ minimization conditions of
\cite{CRT05} in the case where a large proportion of the support is
known. This work was
extended by Jacques in \cite{Jacques:2010} to include
compressible signals and account for noisy measurements.

Friedlander et al. \cite{FMSY:2012} studied the case where non-zero weights are applied to the support estimate, further generalizing and refining the results of Vaswani and Liu; they derive tighter sufficient recovery conditions that depend on the accuracy and size of the support estimate. Mansour et al. \cite{Mansour:SPIE11} then extended these results to incorporate multiple support estimates with varying accuracies. 

Khajehnejad et al.~\cite{WL1_min_Hassibi:2009} also derive sufficient recovery conditions for compressively sampled signals with prior support information using weighted $\ell_1$ minimization. They partition $\{1,\ldots,N\}$ to two sets such that the entries of $x$ supported on each set have a fixed probability of being non-zero, albeit the probabilities differ between the sets. Thus, in this work the prior information is the knowledge of the partition and probabilities. More recently, Oymak et al. \cite{OKH:2012} adopt the same prior information setup as \cite{WL1_min_Hassibi:2009} and derive lower bounds on the the minimum number of Gaussian measurements required for successful recovery when the optimal weights are chosen for each set. Their results are asymptotic in nature and pertain to the non-uniform model where one fixes a signal and draws the matrix at random. In this model, every new instance of the problem requires a new draw of the random measurement matrix. In addition to differing in our model for prior information, our results are uniform in nature, i.e., they pertain to the model where the matrix is drawn once and successful recovery is guaranteed (with high probability) for all sparse signals with sufficiently accurate support information.

Recently, Rauhut and Ward~\cite{RW_interpoation_wL1:2014} analyzed the effectiveness of weighted $\ell_1$ minimization for the interpolation of smooth signals that also admit a sparse representation in an appropriate transform domain. Using a weighted robust null space property, they derive error bounds associated with recovering  functions with coefficient sequences in weighted $\ell_p$ spaces, $0 < p < 1$. This differs from our work which focuses on the effect of the support estimate accuracy on the recovery guarantees, and the relationship between the weighted null space property and the standard null space property (e.g.,\cite{cohen2006csa}).

\subsection{Notation and preliminaries}
Throughout the paper, we adopt the following notation. As stated earlier, $x\in\R^N$ is the $k$-sparse signal to be recovered and $y\in\R^m$ denotes the vector of measurements, i.e., $y = Ax$.  Thus, $k, N,$ and $m$ denote the number of non-zeros entries of $x$, its ambient dimension, and the number of measurements, respectively. $T\subset\{1,...,N\}$ is the support of $x$, and $\widetilde{T}$ is the support estimate used in \eqref{eq:wL1_min}. The cardinality of $\widetilde{T}$ is $|\widetilde{T}|=\rho k$ for some $\rho > 0$ and the accuracy of $\widetilde{T}$ is $\alpha = \frac{|\widetilde{T} \cap T|}{|\widetilde{T}|}$. For an index set ${V}\subset\{1,...,N\}$ we define $${\Gamma}_s(V):=\left\{ U \subset \{1,...,N\} : \big|(V\cap U^c) \cup (V^c\cap U)\big|\leq s\right\}.$$
We use the notation $x_T$ to denote the restriction of the vector $x$ to the set $T$. 
We introduce a \textit{weighted nonuniform} null space property that, as we prove in Section \ref{sec:wNSP}, provides a necessary and sufficient condition for the recovery of sparse vectors supported on a fixed set using weighted $\ell_1$ minimization (with constant weights applied to a support estimate).
\begin{defn}\label{def:wNSP}
Let $T \subset \{1\dots N\}$ with $|T| \leq k$ and $\widetilde{T}\in \Gamma_s{\left(T\right)}$. A matrix $A \in \R^{m\times N}$ is said to have the weighted nonuniform null space property with parameters $T$ and $\widetilde{T}$, and constant $C$ if for any vector $h:Ah=0$, we have 
\[  \w \|h_T\|_1 + (1-\w)\|h_S\|_1 \leq C \|h_{T^c}\|_1, \]
where $S = (\widetilde{T}\cap T^c) \cup (\widetilde{T}^c\cap T)$. In this case, we say $A$ satisfies $\w$-NSP$(T,\widetilde{T},C)$.
\end{defn}

Next we define a \textit{weighted uniform} null space property that lends itself to necessary and sufficient conditions for the recovery of all $k$-sparse vectors from compressive measurements using weighted $\ell_1$ minimization.

\begin{defn}\label{def:wNSP}
A matrix $A \in \R^{m\times N}$ is said to have the weighted null space property with parameters $k$ and $s$, and constant $C$ if for any vector $h:Ah=0$, and for every index set $T \subset \{1\dots N\}$ with $|T| \leq k$ and $S \subset \{1\dots N\}$ with $|S| \leq s$, we have 
\[  \w \|h_T\|_1 + (1-\w)\|h_S\|_1 \leq C \|h_{T^c}\|_1. \]
In this case, we say $A$ satisfies $\w$-NSP$(k,s,C)$.
\end{defn}
Thus, the standard null space property of order $k$, i.e.,  NSP$(k,C),$ is equivalent to $1$-NSP$(k,k,C)$. 
\begin{rem}There should be no confusion between the notation used for the weighted non-uniform and uniform null space properties, as one pertains to subsets and the other to sizes of subsets. 
\end{rem}

\subsection{Main contributions}

\noindent{\bf Necessary and sufficient conditions:} Our first main result is Theorem \ref{thm:NecSuff_wNSP}, identifying necessary and sufficient conditions for weighted $\ell_1$ minimization to recover all $k$-sparse vectors when the error in the support estimate is of size $s$ or less. 

\begin{thm}\label{thm:NecSuff_wNSP}

Given a matrix $A\in \R^{m\times N}$, every $k$-sparse vector $x\in \R^N$ is the unique solution to all optimization problems \eqref{eq:wL1_min} with $\widetilde{T}\in \Gamma_s{\big(\supp{(x)}\big)}$ if and only if $A$
satisfies $\w$-NSP$(k,s,C)$ for some $C<1$.
\end{thm}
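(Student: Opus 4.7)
The plan is to adapt the classical null space property proof to the weighted setting. The key bookkeeping identity, used in both directions, is
\begin{equation*}
\sum_{i\in T^c} w_i |h_i| \;-\; \sum_{i\in T} w_i |h_i| \;=\; \|h_{T^c}\|_1 \;-\; \w\|h_T\|_1 \;-\; (1-\w)\|h_S\|_1,
\end{equation*}
where $S=\Sset$. This follows by splitting each sum according to whether $i\in\widetilde{T}$ (weight $\w$) or $i\in\widetilde{T}^c$ (weight $1$) and collecting terms, and it shows that the combination $\w\|h_T\|_1+(1-\w)\|h_S\|_1$ constrained by $\w$-NSP is precisely what decides whether $x$ beats $x-h$ in the weighted objective.

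For the sufficiency direction, I would assume $A$ satisfies $\w$-NSP$(k,s,C)$ with $C<1$, then fix a $k$-sparse $x$ with support $T$, a support estimate $\widetilde{T}\in\Gamma_s(T)$, a feasible $z\neq x$, and set $h=x-z\in\ker A\setminus\{0\}$. The reverse triangle inequality on $T$ (together with $x_i=0$ on $T^c$) and the display above give
\[
\sum_i w_i|z_i|-\sum_i w_i|x_i| \;\ge\; \|h_{T^c}\|_1-\w\|h_T\|_1-(1-\w)\|h_S\|_1 \;\ge\; (1-C)\|h_{T^c}\|_1,
\]
which is strictly positive provided $\|h_{T^c}\|_1>0$. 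For $\w\in(0,1]$, positivity holds because $h_{T^c}=0$ together with $\w$-NSP forces $\w\|h_T\|_1+(1-\w)\|h_S\|_1\le 0$, hence $h_T=0$ and thus $h=0$, a contradiction; the boundary case $\w=0$ is handled analogously by applying $\w$-NSP to pairs $(T,S')$ with $S'\subseteq T\cap\widetilde{T}$.

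For the necessity direction, I would go in reverse: given $h\in\ker A\setminus\{0\}$, any $T$ with $|T|\le k$, and any $S$ with $|S|\le s$, set $\widetilde{T}:=T\triangle S\in\Gamma_s(T)$ so that $S$ is the symmetric difference of $T$ and $\widetilde{T}$, then take $x:=h_T$ and $z:=-h_{T^c}$. Both $x,z$ are feasible for the same $y=Ax=Az$, $x$ is $k$-sparse with $\supp(x)\subseteq T$, and $z\neq x$. The recovery hypothesis forces $\sum w_i|z_i|>\sum w_i|x_i|$, which via the display rearranges to the pointwise strict inequality $\w\|h_T\|_1+(1-\w)\|h_S\|_1<\|h_{T^c}\|_1$. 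To upgrade this into a uniform constant $C<1$, I would invoke compactness on $K=\{h\in\ker A:\|h\|_1=1\}$: the map
\[
h\;\mapsto\;\max_{|T|\le k,\,|S|\le s}\frac{\w\|h_T\|_1+(1-\w)\|h_S\|_1}{\|h_{T^c}\|_1}
\]
is a maximum of finitely many continuous ratios, hence continuous on $K$, and is strictly less than $1$ pointwise, so by compactness its supremum $C$ is also $<1$.

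The main obstacle I anticipate is this final compactness step---specifically, verifying that $\|h_{T^c}\|_1$ never vanishes on $K$ and that the pointwise strict inequality yields a uniform one. Denominator positivity follows because if some nonzero $h\in\ker A$ were supported on $\le k$ indices, the pair $x=0$, $z=-h$ would violate the recovery hypothesis; the weighted bookkeeping, while notationally fiddly, reduces to the arithmetic in the opening display.
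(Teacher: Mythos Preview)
Your approach is essentially the paper's: it proves sufficiency and necessity as two separate propositions using exactly your bookkeeping identity and the same construction $\widetilde{T}=T\triangle S$, $x=h_T$, competitor $-h_{T^c}$ (the paper phrases necessity as a contrapositive rather than a direct argument, and leaves the extraction of a uniform $C<1$ implicit where you invoke compactness).

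Two small slips to repair. First, in the necessity step you take $\widetilde{T}=T\triangle S$ and note $\widetilde{T}\in\Gamma_s(T)$, but the recovery hypothesis requires $\widetilde{T}\in\Gamma_s(\supp(x))=\Gamma_s(\supp(h_T))$, and $\supp(h_T)$ may be a proper subset of $T$. The fix is to replace $T$ by $T':=\supp(h_T)$ before forming $\widetilde{T}$: since $h$ vanishes on $T\setminus T'$, both $\|h_T\|_1=\|h_{T'}\|_1$ and $\|h_{T^c}\|_1=\|h_{{T'}^c}\|_1$, so the ratio is unchanged, and now $|\supp(x)\triangle\widetilde{T}|=|T'\triangle(T'\triangle S)|=|S|\le s$ as needed. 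Second, your denominator-positivity argument has the roles swapped: if a nonzero $h\in\ker A$ had at most $k$ nonzeros, the pair that violates recovery is $x=h$ (a $k$-sparse signal with $\supp(x)=\supp(h)$, taking e.g.\ $\widetilde{T}=\supp(h)$) and competitor $z=0$, not $x=0$, $z=-h$; with $x=0$ and $\w>0$ the zero vector \emph{is} the unique weighted-$\ell_1$ minimizer among null-space vectors, so no contradiction arises from your pair. With these two corrections the argument is complete.
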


We prove this theorem in Section \ref{sec:wNSP}. There, we also compare $\ell_1$ minimization to weighted $\ell_1$ minimization. For example, we show that if the accuracy of the support estimate $\widetilde{T}$ is at least $50\%$, then weighted $\ell_1$ minimization recovers $x$ if $\ell_1$ minimization recovers $x$ (Corollary \ref{for:equivalence}). Moreover, when the support accuracy exceeds $50\%$ and the weights are sufficiently small, weighted $\ell_1$ minimization can successfully recover $x$ even when standard $\ell_1$ minimization fails (Corollary \ref{cor:NSPimpliesWeighted2}).

\noindent{\bf Weaker conditions on the number of measurements:} Our second main result deals with matrices $A\in\R^{m \times N}$ whose entries are i.i.d. Gaussian random variables. We establish a condition on the number of measurements, $m$, that yields the weighted null space property, and hence guarantees exact sparse recovery using weighted $\ell_1$ minimization. 
\begin{thm}\label{thm:gauss_nonunif}
Let $T$ and $\widetilde{T}$ be two subsets of $\{1,...,N\}$ with $|T|\leq k$ and $|\Sset| \leq s$ and let $A$ be a random matrix with independent zero-mean unit-variance Gaussian entries. Then $A$ satisfies $\w$-NSP$(T,\widetilde{T},C)$ with probability exceeding $1-\epsilon$ provided 
\[\frac{m}{\sqrt{m+1}} \geq  \sqrt{k+s}+ C^{-1}\sqrt{2(\w^2k + s)\ln(eN/k)}  + \Big(\frac{1}{2\pi e^3}\Big)^{1/4} \sqrt{\frac{k}{\ln(eN/k)}} + \sqrt{2\ln\epsilon^{-1}}.\]
\end{thm}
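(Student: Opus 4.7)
The plan is to recast $\w$-NSP$(T,\widetilde{T},C)$ as an ``escape through the mesh'' event and then apply Gordon's theorem, adapting the strategy used for the unweighted NSP in \cite[Chapter~9]{FoucartRauhut:2013} with the weighted functional $\phi(h) := \w\|h_T\|_1 + (1-\w)\|h_S\|_1$ in place of $\|h_T\|_1$. Set $E := \{h \in \R^N : \|h\|_2 = 1,\ \phi(h) \geq C \|h_{T^c}\|_1\}$. Then $A$ satisfies $\w$-NSP$(T,\widetilde T,C)$ if and only if $\ker(A) \cap E = \emptyset$, and by Gordon's escape-through-the-mesh inequality this event has probability at least $1-\epsilon$ as soon as $\lambda_m \geq \omega(E) + \sqrt{2\ln(1/\epsilon)}$, where $\omega(E) := \mathbb{E}\sup_{h\in E}\langle g,h\rangle$ with $g\sim \mathcal{N}(0,I_N)$, and $\lambda_m := \mathbb{E}\|g\|_2 \geq m/\sqrt{m+1}$ for $g \sim \mathcal{N}(0,I_m)$. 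The task therefore reduces to bounding $\omega(E)$, with the $\sqrt{2\ln(1/\epsilon)}$ term supplied directly by Gordon.

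\textbf{Bounding $\omega(E)$.} Let $V := T \cup \widetilde{T}$, so $|V| \leq k+s$ and $\phi$ is supported on $V$. A weighted Cauchy--Schwarz gives, for $\|h\|_2 \leq 1$,
\[
\phi(h) \;\leq\; \sqrt{\w^2|T\cap\widetilde T| + |T\cap\widetilde T^c| + (1-\w)^2|\widetilde T\cap T^c|}\;\|h_V\|_2 \;\leq\; \sqrt{\w^2 k + s}.
\]
Consequently every $h \in E$ satisfies $\|h_{V^c}\|_1 \leq \|h_{T^c}\|_1 \leq C^{-1}\phi(h) \leq R$, with $R := C^{-1}\sqrt{\w^2 k + s}$, together with $\|h_{V^c}\|_2 \leq 1$. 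Splitting the inner product gives
\[
\langle g,h\rangle = \langle g_V,h_V\rangle + \langle g_{V^c},h_{V^c}\rangle \;\leq\; \|g_V\|_2 + \sup\bigl\{\langle g_{V^c},u\rangle : \|u\|_2\leq 1,\ \|u\|_1\leq R\bigr\},
\]
so $\omega(E) \leq \mathbb{E}\|g_V\|_2 + \omega(K_R)$, where $K_R := \{u\in\R^{|V^c|} : \|u\|_2\leq 1,\ \|u\|_1\leq R\}$. Since $\mathbb{E}\|g_V\|_2 \leq \sqrt{|V|} \leq \sqrt{k+s}$, this produces the first summand of the theorem.

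\textbf{Gaussian width of $K_R$ and assembly.} The remaining quantity $\omega(K_R)$ is the Gaussian width of an $\ell_1$-truncated $\ell_2$ ball, for which the sharp Maurey-type estimate used in the proof of \cite[Theorem~9.29]{FoucartRauhut:2013} gives
\[
\omega(K_R) \;\leq\; R\sqrt{2\ln(eN/R^2)} + (2\pi e^3)^{-1/4}\sqrt{R^2/\ln(eN/R^2)}.
\]
Substituting $R^2 = C^{-2}(\w^2 k+s)$ and using monotonicity to replace $\ln(eN/R^2)$ by $\ln(eN/k)$ (valid since $\w\leq 1$ and we are in the regime $R^2 \lesssim k$) yields the middle summand $C^{-1}\sqrt{2(\w^2 k+s)\ln(eN/k)}$, while the remainder collapses to $(2\pi e^3)^{-1/4}\sqrt{k/\ln(eN/k)}$. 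Adding the Gordon deviation $\sqrt{2\ln(1/\epsilon)}$ produces the stated lower bound on $m/\sqrt{m+1}$. The main obstacle is this final width estimate on $K_R$: a crude $\ell_\infty$/Cauchy--Schwarz bound yields only a $\sqrt{2\ln N}$ log factor and misses both the sharper $\ln(eN/k)$ scaling and the precise constant $(2\pi e^3)^{-1/4}$, so the refined empirical-approximation bound of Foucart--Rauhut is indispensable here.
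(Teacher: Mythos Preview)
Your overall architecture is correct and in fact coincides with the paper's: both proofs reduce $\w$-NSP$(T,\widetilde T,C)$ to an escape-through-the-mesh event, bound the Gaussian width of the relevant set, and invoke Gordon. The paper works with the dual cone of $\widetilde H_{T,\widetilde T}$ and exhibits an explicit subset $Q^t$ of it, whereas you first use Cauchy--Schwarz to bound $\phi(h)\le \sqrt{\w^2 k+s}$ and then pass to the simpler body $K_R=\{u:\|u\|_2\le 1,\ \|u\|_1\le R\}$. These two organizations are equivalent: the paper's triangle-inequality split $\|\tilde g_V+z_V\|_2\le \|\tilde g_V\|_2+\|z_V\|_2$ is precisely your $\|g_V\|_2$ term plus the weighted $\ell_2$ norm $t\sqrt{\w^2\alpha\rho k+(1-\w)^2(1-\alpha)\rho k+(1-\alpha\rho)k}$, which after the simplifications $\alpha\rho\le 1$ and $(1-\w)^2\le 1$ becomes exactly $t\sqrt{\w^2 k+s}$; the soft-thresholding of $g_{V^c}$ is identical in both.

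There is, however, a genuine gap in your last step. You state the width bound
\[
\omega(K_R)\;\le\; R\sqrt{2\ln(eN/R^2)}\;+\;(2\pi e^3)^{-1/4}\sqrt{R^2/\ln(eN/R^2)}
\]
and then replace $\ln(eN/R^2)$ by $\ln(eN/k)$ ``by monotonicity, valid since $R^2\lesssim k$''. But $R^2=C^{-2}(\w^2 k+s)$ need not be $\le k$ (take $C<1$, $\w=1$), so the monotonicity direction can go either way; moreover, even when the replacement is legitimate in the first summand, the second summand keeps the factor $R=C^{-1}\sqrt{\w^2 k+s}$ and therefore does \emph{not} collapse to $(2\pi e^3)^{-1/4}\sqrt{k/\ln(eN/k)}$---it acquires an extra $C^{-1}$ and the wrong $k$-dependence. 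The fix is not to commit to the parameter value $\lambda=\sqrt{2\ln(eN/R^2)}$ inside the $K_R$ bound. The estimate you are really using is the parametric one
\[
\omega(K_R)\;\le\;\lambda R+\mathbb{E}\|S_\lambda(g_{V^c})\|_2
\;\le\;\lambda R+\Bigl(|V^c|\,(2/\pi e)^{1/2}\,e^{-\lambda^2/2}/\lambda^2\Bigr)^{1/2},
\]
valid for every $\lambda>0$. Choosing $\lambda=\sqrt{2\ln(eN/k)}$ (this is exactly the paper's choice $Ct$ with $t=C^{-1}\sqrt{2\ln(eN/k)}$) gives $\lambda R=C^{-1}\sqrt{2(\w^2 k+s)\ln(eN/k)}$ and, using $|V^c|\le N$ and $e^{-\lambda^2/2}=k/(eN)$, the residual term $(2\pi e^3)^{-1/4}\sqrt{k/\ln(eN/k)}$ on the nose. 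With this correction your argument matches the paper's line for line.
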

We observe that in the limiting case of large $m, N, k$ with small $k/N$, and taking $\w = 0$, the condition in Theorem \ref{thm:gauss_nonunif} simplifies to 
\[ m \gtrsim   k + {s\ln(eN/k)},\]
which can be significantly smaller than the analogous condition $m \gtrsim k\ln{(eN/k)}$ of standard $\ell_1$ minimization \cite{FoucartRauhut:2013} especially when the support estimate is accurate, i.e., when $s$ is small. 
In Section \ref{sec:GaussMat}, we prove a more general version of Theorem \ref{thm:gauss_nonunif}, namely Theorem \ref{thm:gauss}. Theorem \ref{thm:gauss} suggests  that the choice $\w = 1 - \alpha$ gives the weakest condition on the number of measurements $m$. On the other hand, Proposition \ref{prop:wNSPimpliesvNSP} shows that when ${\alpha} > 0.5$, and the weighted null space property holds for a weight $\w$, it also holds for weights $\v < \w$. Taken together, these results indicate that while the bound on the number of measurements $m$ is minimized for $\w = 1 - \alpha$, recovery by weighted $\ell_1$ minimization is also guaranteed for all weights $\v \in [0,1 - \alpha]$ (when $\alpha > 0.5$). We note that Theorem \ref{thm:gauss} also indicates that when $\alpha > 0.5$, then using any weight $\w < 1$ results in a weaker condition on the number of measurements $m$ than standard $\ell_1$ minimization. In Section \ref{sec:GaussMat}, we also develop the corresponding bounds that guarantee uniform recovery for arbitrary sets $T$ and $\widetilde{T} \subset \{1\dots N\}$. Finally, we present numerical simulations in Section \ref{sec:results} that illustrate our theoretical results.


\section{Weighted null space property}\label{sec:wNSP}

In what follows, we describe the relationship between the weighted and standard null space properties and their associated optimization problems. Specifically, Proposition \ref{prop:WNSPisNecessary} establishes $\w$-$\operatorname{NSP}(k,s,C)$ with some $C<1$ as  a necessary condition for weighted $\ell_1$-minimization to  recover all $k$-sparse vectors $x$ from their measurements $Ax$, given a support estimate with at most $s$ errors.
Proposition \ref{prop:WNSPisSufficient} establishes that the same weighted null space property is also sufficient. Together, Propositions \ref{prop:WNSPisNecessary} and \ref{prop:WNSPisSufficient} yield Theorem \ref{thm:NecSuff_wNSP}.

Proposition \ref{prop:wNSP_from_NSP} relates the weighted null space property to the standard null space properties of size $s$, $k-s$, and $k$. As a consequence, Corollary \ref{cor:NSPimpliesWeighted2} shows that weighted $\ell_1$ minimization can succeed when $\ell_1$ minimization fails provided the support estimate is accurate enough and the weights are small enough. Proposition \ref{prop:wNSPimpliesvNSP} shows that if $s < k$, i.e. the support estimate is at least $50\%$ accurate, then any matrix that satisfies $\w$-NSP$(k,s,C)$ also satisfies $\v$-NSP$(k,s,C)$ for all $\v < \w$. Corollary \ref{cor:NSPimpliesWeighted} shows that the standard null space property guarantees that weighted $\ell_1$ minimization succeeds when the support estimate is at least $50\%$ accurate, regardless of $\w\in (0,1] $. Corollary \ref{for:equivalence} establishes the equivalence of $\w$-NSP$(s,s,C_s)$ and NSP$(s,C_s)$. This shows that weighted $\ell_1$ minimization succeeds in recovering all $s$-sparse signals from a support estimate that is $50\%$ accurate if and only if $\ell_1$ minimization recovers alls $s$-sparse signals.

\begin{prop}\label{prop:WNSPisNecessary}
Let $A$ be an $m\times n $ matrix that does not satisfy $\w$-NSP$(k,s,C)$ for any $C<1$. Then, there exists a $k$-sparse vector $x$ satisfying $Ax = b$ and a set $\widetilde{T}$ with 
$|({\widetilde{T}}\cap T^c) \cup ({\widetilde{T}^c}\cap T)|\leq s$ such that $x$ is not the unique minimizer of the optimization problem \eqref{eq:wL1_min}.
\end{prop}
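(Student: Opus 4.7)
The plan is a standard compactness argument to produce a witness to the failure of $\w$-NSP, followed by an explicit construction of $\widetilde{T}$ and two distinct feasible vectors whose weighted $\ell_1$-values are ordered the wrong way.

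First I would extract a limiting witness. For each $C_n = 1-1/n$, the hypothesis yields some $h_n\in\ker A\setminus\{0\}$ and index sets $T_n, S_n$ with $|T_n|\leq k$, $|S_n|\leq s$ violating the $\w$-NSP$(k,s,C_n)$ inequality. Normalizing $\|h_n\|_2=1$, and using that only finitely many pairs $(T_n,S_n)$ exist, I pass to a subsequence on which $(T_n,S_n)\equiv(T,S)$ is constant and $h_n$ converges to some $h$ on the unit sphere. By continuity of the $\ell_1$-norms and closedness of $\ker A$, the limit satisfies $h\in\ker A\setminus\{0\}$ together with
\[
\w\|h_T\|_1 + (1-\w)\|h_S\|_1 \;\geq\; \|h_{T^c}\|_1.
\]

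Next I would engineer the support estimate and the two candidate minimizers. Split $S=S_T\cup S_{T^c}$ with $S_T:=S\cap T$ and $S_{T^c}:=S\cap T^c$, and set $\widetilde{T}:=(T\setminus S_T)\cup S_{T^c}$. A direct check gives $\widetilde{T}\triangle T=S$, so the required bound $|(\widetilde{T}\cap T^c)\cup(\widetilde{T}^c\cap T)|\leq s$ holds. Take $x:=h_T$, which is $k$-sparse, let $b:=Ax$, and consider the alternative feasible vector $z:=-h_{T^c}$; since $Ah=0$, we have $Az=Ax=b$. Now partition $\{1,\dots,N\}$ into the four blocks $T\setminus S_T$, $S_T$, $S_{T^c}$, $T^c\setminus S_{T^c}$; by construction $\widetilde{T}$ is the union of the first and third, so
\[
\|x\|_{w,1} = \w\|h_{T\setminus S_T}\|_1 + \|h_{S_T}\|_1,\qquad
\|z\|_{w,1} = \w\|h_{S_{T^c}}\|_1 + \|h_{T^c\setminus S_{T^c}}\|_1.
\]
Expanding $\|h_T\|_1$, $\|h_S\|_1$, and $\|h_{T^c}\|_1$ along the same four-block partition inside the limit inequality and canceling the $\|h_{S_T}\|_1$ and $(1-\w)\|h_{S_{T^c}}\|_1$ terms reduces the inequality directly to $\|z\|_{w,1}\leq\|x\|_{w,1}$. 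Since $x-z=h_T+h_{T^c}=h\neq 0$, we have $x\neq z$, so $x$ is not the unique minimizer of \eqref{eq:wL1_min}.

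The compactness step is routine; one mild point is that the limit inequality is non-strict, but this is harmless because any feasible $z\neq x$ with $\|z\|_{w,1}\leq \|x\|_{w,1}$ already destroys uniqueness. The main obstacle, and the only place where the construction could really go wrong, is choosing $\widetilde{T}$ so that the four-block partition aligns the weights on $\widetilde{T}$ versus $\widetilde{T}^c$ with the weights $\w$ and $1-\w$ on $T$ and $S$ inside the NSP inequality; the choice $\widetilde{T}=(T\setminus S_T)\cup S_{T^c}$ is tailored precisely so that this alignment happens and the cancellations go through.
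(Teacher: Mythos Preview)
Your proof is correct and follows essentially the same route as the paper. Both arguments extract a witness $(h,T,S)$ in the null space with $\w\|h_T\|_1+(1-\w)\|h_S\|_1\ge\|h_{T^c}\|_1$, set $\widetilde{T}$ to be the symmetric difference $T\triangle S=(T\cap S^c)\cup(T^c\cap S)$ (your $(T\setminus S_T)\cup S_{T^c}$ is exactly the paper's $(T\cap S^c)\cup(T^c\cap S)$), take $x=h_T$, and compare its weighted $\ell_1$-norm against that of $-h_{T^c}$. The only substantive difference is that the paper simply asserts the existence of the witness with the non-strict inequality at $C=1$, whereas you supply the compactness/subsequence argument that justifies it; this is a harmless bit of added rigor rather than a different idea.
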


\begin{proof}
Since $A$ does not satisfy $\w$-NSP$(k,s,C)$ for any $C<1$, there exists a vector $h: Ah=0$ and  index sets $T$ with $|T|\leq k$ and $S$ with $|S|\leq s$ such that 
$Ah_T=-Ah_{T^c}$ and $$\w\|h_T\|_1 + (1-\w)\|h_S\|_1 \geq \|h_{T^c}\|_1.$$
Define $\widetilde{T}:= (T^c \cap S)\cup(T\cap S^c)$, so that  $S= (T\cap\widetilde{T}^c) \cup (T^c \cap \widetilde{T})$. Substituting for $S$, splitting the set $T$, and simplifying we obtain
\[ \| h_{{T\cap\widetilde{T}^c}} \|_1 + \w \| h_{{T\cap\widetilde{T}}} \|_1    \geq \w\| h_{T^c\cap\widetilde{T}} \|_1+  \| h_{T^c\cap\widetilde{T}^c} \|_1.\] 
In other words, the weighted $\ell_1$-norm of the vector $h_T$ equals or exceeds that of $h_{T^c}$. So  $h_T$ is not the unique minimizer of \eqref{eq:wL1_min}.
This establishes the necessity of the $\w$-NSP condition. 
\end{proof}

\begin{prop}\label{prop:WNSPisSufficient}
Let $A$ be an $m\times n $ matrix that satisfies $\w$-NSP$(k,s,C)$ for some $C<1$. Then, every  $k$-sparse vector $x$ is the unique minimizer of the optimization problem \eqref{eq:wL1_min} provided $\widetilde{T}$ satisfies $|({\widetilde{T}}\cap T^c) \cup ({\widetilde{T}^c}\cap T)|\leq s$.
\end{prop}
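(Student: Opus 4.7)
The plan is to fix any $z\in \R^N$ with $Az = Ax$ and $z\ne x$, set $h := z-x \in \ker A\setminus\{0\}$, and show that $\sum_i w_i|z_i| > \sum_i w_i|x_i|$. First I would introduce the disjoint partition $A_1 := T\cap\widetilde{T}$, $A_2 := T\cap\widetilde{T}^c$, $A_3 := T^c\cap\widetilde{T}$, $A_4 := T^c\cap\widetilde{T}^c$, noting that $T = A_1\cup A_2$, $T^c = A_3\cup A_4$, $\widetilde{T} = A_1\cup A_3$, and $S = A_2\cup A_3$ with $|S|\le s$ by hypothesis.

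Using $x_{T^c} = 0$ together with the reverse triangle inequality $|x_i+h_i|\ge |x_i|-|h_i|$ on $T$, I would expand both weighted objectives piece by piece and obtain
\[\sum_{i=1}^N w_i|z_i| - \sum_{i=1}^N w_i|x_i| \;\ge\; -\w\|h_{A_1}\|_1 - \|h_{A_2}\|_1 + \w\|h_{A_3}\|_1 + \|h_{A_4}\|_1.\]
The key algebraic step is then to regroup the right-hand side, using $\|h_{A_4}\|_1 = \|h_{T^c}\|_1 - \|h_{A_3}\|_1$, into the form
\[\|h_{T^c}\|_1 - \w\|h_T\|_1 - (1-\w)\|h_S\|_1,\]
so that the combination matches exactly the left-hand side of the $\w$-NSP condition. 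Applying $\w$-NSP$(k,s,C)$ to $h$, $T$, and $S$ then yields
\[\sum_i w_i|z_i| - \sum_i w_i|x_i| \;\ge\; (1-C)\|h_{T^c}\|_1 \;\ge\; 0,\]
since $C<1$.

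To upgrade this to strict inequality I would argue that equality can only hold when $h_{T^c}=0$, in which case $h$ is supported on $T$ with $|T|\le k$; applying $\w$-NSP to this $T$ and any admissible $S$ forces $\w\|h_T\|_1 + (1-\w)\|h_S\|_1 \le 0$, and choosing $S$ to be a singleton (or invoking $\w>0$) then makes $h$ vanish entirely, contradicting $h\ne 0$. The main obstacle is the bookkeeping in the regrouping step: since $\widetilde{T}$ generically does not coincide with $T$, the weight $\w$ naturally acts on $A_1\cup A_3$ rather than on $T$, and it is only after splitting each term into the $A_i$ and reassembling along the correct pairings that the hypothesis-ready combination $\w\|h_T\|_1 + (1-\w)\|h_S\|_1$ emerges; the rest of the argument is essentially the classical NSP-implies-recovery scheme carried through with weights.
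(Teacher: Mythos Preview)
Your proposal is correct and follows essentially the same route as the paper's proof: use the reverse triangle inequality on $T$, regroup the resulting terms into the form $\|h_{T^c}\|_1-\w\|h_T\|_1-(1-\w)\|h_S\|_1$, and invoke $\w$-NSP$(k,s,C)$ with $C<1$ to force $h=0$. Your explicit four-set partition $A_1,\dots,A_4$ just makes visible the bookkeeping that the paper compresses into two displayed inequalities; in particular your regrouping identity $-\w\|h_{A_1}\|_1-\|h_{A_2}\|_1-(1-\w)\|h_{A_3}\|_1=-\w\|h_T\|_1-(1-\w)\|h_S\|_1$ is exactly the passage the paper makes between its two displays. One minor point: your treatment of the strict-inequality step is actually more careful than the paper's, since you exploit the freedom in the \emph{uniform} condition to choose $S$ to be an arbitrary singleton inside $T$, which cleanly handles the case $\w=0$ (for $s\ge 1$) that the paper's one-line contradiction leaves implicit.
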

\begin{proof}
Let $x^*$ be a minimizer of \eqref{eq:wL1_min} and define $h:=x^*-x$. Then by the optimality of $x^*$, and using the reverse triangle inequality \[  \w \|h_{T^c}  \|_1 + (1-\w)\|h_{{\widetilde{T}^c}\cap T^c}  \|_1 \leq \w \|h_{T}  \|_1 + (1-\w)\|h_{\widetilde{T}^c \cap T}  \|_1. 
\]
Consequently
\[  \|h_{T^c}  \|_1  \leq \w \|h_{T}  \|_1 +  (1-\w)\|h_{{\widetilde{T}}\cap T^c}  \|_1 + (1-\w)\|h_{\widetilde{T}^c \cap T}  \|_1.\]
Setting $S=({\widetilde{T}}\cap T^c) \cup ({\widetilde{T}^c}\cap T)$, we note that when $|S|\leq s$, the above inequality is in contradiction with $\w$-NSP$(k,s,C)$ for $C<1$ unless $h=0$. We thus conclude that $x^*=x$.
\end{proof}

\begin{prop}\label{prop:wNSP_from_NSP}
Let $A$ be an $m\times n $ matrix that satisfies $1$-NSP$(s,s,C_s)$ for some $C_s<1$ as well as $1$-NSP$(k-s,k-s,C_{k-s})$ and $1$-NSP$(k,k,C_k)$ for some finite $C_{k-s},C_k$. Then, $A$ satisfies $\w$-NSP$(k,s,C(\w))$, with \quad $C(\w)=\frac{(1+\w)C_sC_{k-s} +C_s+\w C_{k-s}}{1-C_s C_{k-s}}$. 
\end{prop}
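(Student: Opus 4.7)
The plan is to establish the desired inequality $\w\|h_T\|_1 + (1-\w)\|h_S\|_1 \leq C(\w)\|h_{T^c}\|_1$ for any $h$ with $Ah=0$ and any $T,S$ with $|T|\leq k$ and $|S|\leq s$. I would proceed in two stages: first a rearrangement reduction to the case $S\subseteq T$, and then a direct argument applying the assumed NSPs of orders $s$ and $k-s$ to a disjoint partition of $T$ into pieces of exactly those sizes.

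First, I would reduce to the sorted case. For fixed $h$, let $T^*$ denote an index set of the $k$ entries of $h$ of largest magnitude, and take $S^*\subseteq T^*$ to consist of the $s$ of those of largest magnitude, so that $S^*\subseteq T^*$ and $|T^*\setminus S^*|=k-s$. For any admissible $(T,S)$, the routine rearrangement estimates
$$\|h_T\|_1\leq \|h_{T^*}\|_1,\qquad \|h_S\|_1\leq \|h_{S^*}\|_1,\qquad \|h_{T^c}\|_1\geq \|h_{(T^*)^c}\|_1$$
hold, so it suffices to prove
$$\w\|h_{T^*}\|_1 + (1-\w)\|h_{S^*}\|_1 \leq C(\w)\|h_{(T^*)^c}\|_1.$$

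Next, set $U^*:= T^*\setminus S^*$ with $|U^*|=k-s$. Applying $1$-NSP$(s,s,C_s)$ to $S^*$ and $1$-NSP$(k-s,k-s,C_{k-s})$ to $U^*$ produces the coupled system
\begin{align*}
\|h_{S^*}\|_1 &\leq C_s\bigl(\|h_{U^*}\|_1 + \|h_{(T^*)^c}\|_1\bigr),\\
\|h_{U^*}\|_1 &\leq C_{k-s}\bigl(\|h_{S^*}\|_1 + \|h_{(T^*)^c}\|_1\bigr).
\end{align*}
Substituting each into the other and isolating the left-hand sides (the denominator $1-C_sC_{k-s}$ is positive by the well-posedness of $C(\w)$) yields the decoupled estimates
$$\|h_{S^*}\|_1\leq \frac{C_s(1+C_{k-s})}{1-C_sC_{k-s}}\|h_{(T^*)^c}\|_1,\qquad \|h_{U^*}\|_1\leq \frac{C_{k-s}(1+C_s)}{1-C_sC_{k-s}}\|h_{(T^*)^c}\|_1.$$

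Finally, since $T^* = S^*\sqcup U^*$, one has $\w\|h_{T^*}\|_1 + (1-\w)\|h_{S^*}\|_1 = \|h_{S^*}\|_1 + \w\|h_{U^*}\|_1$. Inserting the two decoupled bounds, the coefficient of $\|h_{(T^*)^c}\|_1$ becomes $\tfrac{C_s(1+C_{k-s}) + \w C_{k-s}(1+C_s)}{1-C_sC_{k-s}}$, which expands algebraically to exactly $C(\w)$. Combining with $\|h_{(T^*)^c}\|_1\leq \|h_{T^c}\|_1$ from the reduction step gives the claim.

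The main obstacle is recognizing that the naive decomposition $T = (T\cap S)\cup(T\cap S^c)$ fails to give the stated bound, because $|T\cap S^c|$ can exceed $k-s$ so that $1$-NSP$(k-s,k-s,C_{k-s})$ is not directly applicable to that piece; a direct attempt of this kind produces an extra factor of $(1+C_s)$ in front of $C_s(1+C_{k-s})/(1-C_sC_{k-s})$. The rearrangement step sidesteps this by passing to the sorted pair $(T^*,S^*)$, which automatically satisfies $S^*\subseteq T^*$ with $|T^*\setminus S^*| = k-s$, the exact structure under which the two NSP inequalities couple cleanly via the $(1-C_sC_{k-s})^{-1}$ resolvent. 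The remainder is routine manipulation.
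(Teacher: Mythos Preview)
Your proof is correct and follows essentially the same approach as the paper's: reduce to the sorted pair $(T^*,S^*)$ with $S^*\subset T^*$, split $T^*$ into $S^*$ and $U^*:=T^*\setminus S^*$ of sizes $s$ and $k-s$, apply the two standard NSPs to obtain a coupled $2\times 2$ system, and resolve via the factor $(1-C_sC_{k-s})^{-1}$. The only cosmetic difference is that you solve the coupled system for both $\|h_{S^*}\|_1$ and $\|h_{U^*}\|_1$ and then insert both bounds, whereas the paper solves only for $\|h_{T\cap\widetilde{T}^c}\|_1$ (your $\|h_{S^*}\|_1$) and substitutes this back into the intermediate inequality; the algebra collapses to the same $C(\w)$ either way. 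Your handling of the reduction step is also slightly cleaner: by defining $T^*,S^*$ to have cardinalities exactly $k$ and $s$ (breaking ties arbitrarily), you sidestep the paper's appeal to $1$-NSP$(k,k,C_k)$ to rule out zero entries on $T^*$.
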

\begin{proof}
Let $h$ be any fixed vector in the null space of $A$ and let $T^*$ and $S^*$ be the supports of its $k$-largest and $s$-largest entries (in modulus), respectively.  To check whether $\w\|h_T\|_1+(1-\w)\|h_S\|_1 \leq C \|h_{T^c}\|_1$ holds for all $T:|T|\leq k$ and $S: |S|\leq s$, it suffices to check whether $\w\|h_{T^*}\|_1+(1-\w)\|h_{S^*}\|_1 \leq C \|h_{{(T^*)}^c}\|_1$ holds. To see this, note that the left hand side is largest and the right hand side is smallest over all choices of $S:|S| \leq s$ and $T:|T|\leq k$ when $(T,S)=(T^*,S^*)$. 
Since $A$ satisfies $1$-NSP$(s,s,C_s)$ and $1$-NSP$(k,k,C_k)$, $h_{T^*}$ and $h_{S^*}$ have no zero entries on $T^*$ or $S^*$, respectively (otherwise we would have $h_{(T^*)^c}=0$, contradicting the null space property). 
Thus, to prove $\w$-NSP$(k,s,C)$ we may now examine for any vector $h: Ah=0$ only sets $T$, $S$ with $S\subset T$ and $|T|=k$, $|S|=s$. Defining $\widetilde{T}:=T\setminus S$ (which implies $S=T\setminus{\widetilde T}$, and $|\widetilde{T}|=k-s$) we have
\begin{align}
\w\|h_T\|_1+(1-\w)\|h_S\|_1 &= \w\|h_{\widetilde{T}}\|_1 + \|h_{T\cap\widetilde{T}^c}\|_1 \nonumber \\
&\leq \w \ C_{k-s}( \|h_{T\cap\widetilde{T}^c}\|_1+\|h_{T^c}\|_1)+\|h_{T\cap \widetilde{T}^c}\|_1.
\label{eq:some_eq}\end{align}
Moreover,  we have  $\|h_{\widetilde{T}}\|_1 \leq C_{k-s} (\|h_{T\cap{\widetilde T}^c}\|_1+\|h_{T^c}\|_1)$. Hence,
\begin{align}
\|h_{T\cap\widetilde{T}^c}\|_1 &\leq C_s (\|h_{T^c}\|_1+\|h_{\widetilde{T}}\|_1)\nonumber\\
 &\leq C_s \Big((1+C_{k-s})\|h_{T^c}\|_1+ C_{k-s}\|h_{T\cap\widetilde{T}^c}\|_1\Big) \nonumber, 
\end{align}
and consequently $  \|h_{T\cap\widetilde{T}^c}\|_1 \leq  \frac{C_s (1+C_{k-s})}{1-C_s C_{k-s}} \|h_{T^c}\|_1$. Substituting in \eqref{eq:some_eq}, we obtain
\begin{equation}
\w\|h_T\|_1+(1-\w)\|h_S\|_1 \leq \frac{(1+\w)C_sC_{k-s} +C_s+\w C_{k-s}}{1-C_s C_{k-s}} \|h_{T^c}\|_1, 
\end{equation} which is the desired result.
\end{proof}

\begin{cor}\label{cor:NSPimpliesWeighted2}
Let $k$ be a positive integer  and suppose that $C_{k-s}>1$ is the smallest constant so that the $m\times n$ matrix  $A$ satisfies $1$-NSP$(k-s,k-s,C_{k-s})$. 
Suppose there exists an integer $s<k$ such that $A$ satisfies $1$-NSP$(s,s,C_s)$ with constant $C_s<1/(2C_{k-s}+1)$.
 Let $x$ be any $k$-sparse vector in $\R^n$, with $\textrm{supp}(x) = T$. If $\widetilde{T}$ satisfies 
$|\widetilde{T}\cap T^c| + |\widetilde{T}^c\cap T | \leq s $ and $\w \leq  \frac{1-2C_s C_{k-s}-C_s}{C_{k-s}(C_s+1)}$, then  $x=x^*(\w,\widetilde{T})$, the minimizer of \eqref{eq:wL1_min} with $y=Ax$.
\end{cor}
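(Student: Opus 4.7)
The plan is to obtain this corollary as an essentially immediate consequence of Proposition \ref{prop:wNSP_from_NSP} followed by Proposition \ref{prop:WNSPisSufficient} (equivalently, the sufficiency direction of Theorem \ref{thm:NecSuff_wNSP}). There is no deep new argument required; the content is reduced to an algebraic check that the weight range in the hypothesis is exactly what forces the constant produced by Proposition \ref{prop:wNSP_from_NSP} to lie below $1$.

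First, I would invoke Proposition \ref{prop:wNSP_from_NSP} with the supplied NSP constants. The hypotheses give $1$-NSP$(s,s,C_s)$ with $C_s<1/(2C_{k-s}+1)<1$ and $1$-NSP$(k-s,k-s,C_{k-s})$ with finite $C_{k-s}$; the third ingredient, $1$-NSP$(k,k,C_k)$ for some finite $C_k$, is either carried along as a standing assumption on $A$ or obtained from the fact that the proof of Proposition \ref{prop:wNSP_from_NSP} only uses it to reduce to the case of the $k$ largest-in-modulus entries forming $T^*$ (which can be handled with a shorter support if $h$ is sparser). The conclusion is that $A$ satisfies $\w$-NSP$(k,s,C(\w))$ with
\[
C(\w) \;=\; \frac{(1+\w)C_sC_{k-s}+C_s+\w C_{k-s}}{1-C_s C_{k-s}}.
\]

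Second, I would verify $C(\w)<1$ for the stated range of $\w$. Since $C_s<1/(2C_{k-s}+1)<1/C_{k-s}$, the denominator $1-C_s C_{k-s}$ is strictly positive, so clearing it preserves the inequality. A direct rearrangement of $C(\w)<1$ yields
\[
\w\, C_{k-s}(C_s+1) \;<\; 1-2C_s C_{k-s}-C_s,
\]
which is precisely the bound on $\w$ assumed in the corollary. The same hypothesis $C_s<1/(2C_{k-s}+1)$ makes the right-hand side strictly positive, so the admissible range for $\w$ is non-empty.

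Finally, with $A$ satisfying $\w$-NSP$(k,s,C(\w))$ for some $C(\w)<1$, and with $\widetilde{T}$ satisfying $|(\widetilde{T}\cap T^c)\cup(\widetilde{T}^c\cap T)|\leq s$, Proposition \ref{prop:WNSPisSufficient} applies and gives that $x$ is the unique minimizer of \eqref{eq:wL1_min}, as claimed. The only mildly delicate point, and the nearest thing to an obstacle, is keeping track of the sign of $1-C_s C_{k-s}$ and confirming positivity of the right-hand side of the weight bound; everything else has already been done inside Proposition \ref{prop:wNSP_from_NSP}.
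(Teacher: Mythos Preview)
Your proposal is correct and follows essentially the same route as the paper: invoke Proposition~\ref{prop:wNSP_from_NSP} to obtain $\w$-NSP$(k,s,C(\w))$, check algebraically that the weight bound is equivalent to $C(\w)<1$, and conclude via Proposition~\ref{prop:WNSPisSufficient}. Your remark about the implicit $1$-NSP$(k,k,C_k)$ hypothesis is a fair observation that the paper simply glosses over, but it does not alter the argument.
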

\begin{proof}
 Proposition \ref{prop:wNSP_from_NSP} implies that $A$ satisfies $\w$-NSP$(k,s,C(\w))$ with $C(\w)=\frac{(1+\w)C_sC_{k-s} +C_s+\w C_{k-s}}{1-C_s C_{k-s}}$. If $\w\leq \frac{1-2C_s C_{k-s}-C_s}{C_{k-s}(C_s+1)}$ and $C_s < \frac{1}{2C_{k-s}+1}$ then $0 \leq C(\w) < 1$, so Proposition \ref{prop:WNSPisSufficient} guarantees that $x=x^*(\w,\widetilde{T})$.
\end{proof}

\begin{prop}\label{prop:wNSPimpliesvNSP}
Let $A$ be an $m\times n$ matrix that satisfies $\w$-NSP$(k,s,C)$. If $s \leq k$, then for every weight $\v \leq \w$, the matrix $A$ satisfies $\v$-NSP$(k,s,C)$. 
\end{prop}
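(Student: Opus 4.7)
The plan is to reduce the question to a single worst-case pair of sets and then finish with a one-line algebraic comparison. Fix any $h$ in the null space of $A$, let $T^*$ denote the support of the $k$ largest-in-modulus entries of $h$, and let $S^*$ denote the support of the $s$ largest-in-modulus entries. The first step is the same worst-case reduction used inside the proof of Proposition \ref{prop:wNSP_from_NSP}: because $\v\|h_T\|_1$ is maximized over $|T|\leq k$ at $T=T^*$, $(1-\v)\|h_S\|_1$ is maximized over $|S|\leq s$ at $S=S^*$, and $C\|h_{T^c}\|_1$ is minimized over $|T|\leq k$ at $T=T^*$, verifying the $\v$-NSP inequality at the pair $(T^*, S^*)$ implies it at every admissible pair.

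The second step invokes the hypothesis $s \leq k$. Since $S^*$ is a top-$s$ support and $T^*$ is a top-$k$ support, breaking any ties consistently ensures $S^* \subseteq T^*$, so $\|h_{S^*}\|_1 \leq \|h_{T^*}\|_1$. This is the only place where the assumption $s\leq k$ is used, and it is precisely what makes the claim true.

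The third step is a one-line algebraic comparison. Since $\v\leq \w$ and $\|h_{T^*}\|_1 - \|h_{S^*}\|_1 \geq 0$,
\[
\v\|h_{T^*}\|_1 + (1-\v)\|h_{S^*}\|_1 \leq \w\|h_{T^*}\|_1 + (1-\w)\|h_{S^*}\|_1.
\]
Applying the assumed $\w$-NSP$(k,s,C)$ at $(T^*, S^*)$ bounds the right-hand side by $C\|h_{(T^*)^c}\|_1$, completing the verification at the worst case.

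I do not anticipate any real obstacle; the whole argument is essentially bookkeeping. The one place to be careful is the worst-case reduction, where one needs to confirm that both sides of the $\v$-NSP inequality are monotone in the appropriate direction as $T$ and $S$ range over sets of bounded size, so that checking $(T^*, S^*)$ really suffices. Once that is recorded, the hypothesis $s\leq k$ yields $\|h_{S^*}\|_1 \leq \|h_{T^*}\|_1$, and the comparison with the assumed $\w$-NSP is immediate.
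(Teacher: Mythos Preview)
Your proof is correct and follows essentially the same approach as the paper: reduce to the worst-case pair $(T^*,S^*)$, use $s\leq k$ to get $S^*\subseteq T^*$ (hence $\|h_{S^*}\|_1\leq\|h_{T^*}\|_1$), and compare the $\v$- and $\w$-expressions algebraically. The only difference is cosmetic ordering---the paper applies $\w$-NSP at $(T^*,S^*)$ first and then unwinds to arbitrary $(T,S)$, whereas you do the worst-case reduction up front---but the content is identical.
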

\begin{proof} Since $A$ satisfies $\w$-NSP$(k,s,C)$, then any vector $h : Ah = 0$ satisfies
\[ \w\|h_T\|_1+(1-\w)\|h_S\|_1 \leq C \|h_{T^c}\|_1, \]
for all index sets $T \subset \{1\dots N\}$ with $|T| \leq k$ and $S \subset \{1\dots N\}$ with $|S| \leq s$. In particular, consider the sets $T^*= T^*(h)$ and $S^* = S^*(h)$ indexing the largest $k$ and $s$ entries in magnitude of $h$. We have
\[ \w\|h_{T^*}\|_1+(1-\w)\|h_{S^*}\|_1 \leq C \|h_{T^{*c}}\|_1. \]
Let $\v < \w$ and write the above as
\[ \v\|h_{T^*}\|_1 + (\w-\v)\|h_{T^*}\|_1+(1-\w)\|h_{S^*}\|_1 \leq C \|h_{T^{*c}}\|_1. \]
Since $s < k$, the set\ $S^* \subset T^*$, which implies
\[ \v\|h_{T^*}\|_1 + (\w-\v)\|h_{S^*}\|_1+(1-\w)\|h_{S^*}\|_1 \leq C \|h_{T^{*c}}\|_1, \]
which is equivalent to 
\[ \v\|h_{T^*}\|_1 + (1-\v)\|h_{S^*}\|_1 \leq C \|h_{T^{*c}}\|_1. \]
Replacing $T^*$ by an arbitrary $T$ of the same size and $S^*$ by an arbitrary $S$ of the same size decreases the left hand side. Replacing $T^{*c}$ by $T^c$ increases the right hand side. So,
\[ \v\|h_{T}\|_1 + (1-\v)\|h_{S}\|_1 \leq C \|h_{T^{c}}\|_1, \]
for all $S,T$  with $|S| \leq s$ and $|T| \leq k$. This is $\v$-NSP($k,s,C$), and it holds for all $\v<\w$ once it holds for $\w$. In particular, it holds for $\v=0$. 
\end{proof}
\begin{rem} The condition $s \leq k$ is satisfied when a support estimate set $\widetilde{T}$ with $|({\widetilde{T}}\cap T^c) \cup ({\widetilde{T}^c}\cap T)|\leq s$ has an accuracy $\alpha \geq 0.5$. Therefore, Proposition \ref{prop:wNSPimpliesvNSP} states that if the support estimate is at least $50\%$ accurate, any matrix $A$ that satisfies $\w$-NSP$(k,s,C)$ also satisfies $\v$-NSP$(k,s,C)$ for every weight $\v < \w$.
\end{rem}

\begin{cor}\label{cor:NSPimpliesWeighted}
Let $A$ be an $m\times n$ matrix that satisfies $1$-NSP$(k,k,C_k)$ with $C_k<1$. Then, for every $k$-sparse vector $x$ supported on some set $T$, and for every support estimate $\widetilde{T}$ with $\alpha:= \frac{|T\cap \widetilde{T}|}{|\widetilde{T}|}\geq\frac{1}{2}$ it holds that  $x=x^*(\w,\widetilde{T})$, the minimizer of \eqref{eq:wL1_min} with $b=Ax$, and $0 \leq \w<1$.
\end{cor}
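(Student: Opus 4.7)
The plan is to chain Proposition \ref{prop:wNSPimpliesvNSP} and Proposition \ref{prop:WNSPisSufficient}, starting from the observation that the standard null space property $1$-NSP$(k,k,C_k)$ is literally the same statement as $\w$-NSP$(k,k,C_k)$ with $\w = 1$ (the coefficient $(1-\w)$ on $\|h_S\|_1$ vanishes).

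First, I would invoke Proposition \ref{prop:wNSPimpliesvNSP} with parameters $\w = 1$ and $s = k$; the hypothesis $s \leq k$ holds trivially, so we obtain that $A$ satisfies $\v$-NSP$(k,k,C_k)$ for every $\v \in [0,1]$. In particular, for any $\w \in [0,1)$, the matrix $A$ satisfies $\w$-NSP$(k,k,C_k)$.

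Next, I would verify that the symmetric difference size $s := |(\widetilde{T}\cap T^c)\cup(\widetilde{T}^c\cap T)|$ is at most $k$ whenever $\alpha \geq 1/2$. A direct count gives $s = |\widetilde{T}| + |T| - 2|\widetilde{T}\cap T| = (1-2\alpha)|\widetilde{T}| + |T| \leq |T| \leq k$, where the first inequality uses $\alpha \geq 1/2$. Once $s \leq k$, the property $\w$-NSP$(k,k,C_k)$ immediately implies $\w$-NSP$(k,s,C_k)$, since any set $S$ with $|S|\leq s$ also has $|S|\leq k$, so the defining inequality is already known to hold for such $S$.

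Finally, since $C_k < 1$ and the chosen $\widetilde{T}$ respects the symmetric-difference bound of size $s$, Proposition \ref{prop:WNSPisSufficient} applies and concludes that $x$ is the unique minimizer of \eqref{eq:wL1_min} with $y = Ax$. I do not expect any genuine obstacle; the only subtle point is the short counting argument showing $\alpha \geq 1/2 \Rightarrow s \leq k$, which is precisely what unlocks the weight-reduction step provided by Proposition \ref{prop:wNSPimpliesvNSP}.
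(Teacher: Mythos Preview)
Your proposal is correct and follows essentially the same approach as the paper: the paper's proof is the one-line remark that the result follows from Proposition~\ref{prop:wNSPimpliesvNSP} with $\w=1$ and $s=k$, together with Proposition~\ref{prop:WNSPisSufficient}. You have simply unpacked this, including the short counting argument showing that $\alpha\geq 1/2$ forces the symmetric difference to have size at most $k$, which the paper relegates to the remark following Proposition~\ref{prop:wNSPimpliesvNSP}.
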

\begin{proof}
This follows from Proposition \ref{prop:wNSPimpliesvNSP} by setting $\w=1$ and $s=k$ (and applying Proposition \ref{prop:WNSPisSufficient}).

\end{proof}

\begin{cor}\label{for:equivalence} The weighted null space property $\w$-NSP$(s,s,C_s)$ and the standard null space property $1$-NSP$(s,s,C_s)$ are equivalent. 
\end{cor}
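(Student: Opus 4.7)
The plan is to prove both implications separately, using the fact that the hypothesis on $S$ and $T$ in $\w$-NSP$(s,s,C_s)$ imposes the same size bound on both sets, so we have the freedom to choose $S=T$.

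For the implication $\w$-NSP$(s,s,C_s) \Rightarrow 1$-NSP$(s,s,C_s)$, I would take any $h$ with $Ah=0$ and any $T\subset\{1,\dots,N\}$ with $|T|\leq s$, and then simply apply the weighted inequality with the choice $S=T$ (which is admissible since $|S|=|T|\leq s$). The left-hand side then collapses via $\w\|h_T\|_1+(1-\w)\|h_T\|_1=\|h_T\|_1$, yielding $\|h_T\|_1\leq C_s\|h_{T^c}\|_1$, which is exactly $1$-NSP$(s,s,C_s)$.

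For the reverse implication $1$-NSP$(s,s,C_s)\Rightarrow \w$-NSP$(s,s,C_s)$, I would invoke Proposition \ref{prop:wNSPimpliesvNSP} directly with parameters $k=s$. Since $1$-NSP$(s,s,C_s)$ is by definition the weighted nullspace property with weight $\w=1$ and since the hypothesis $s\leq k$ becomes $s\leq s$ (trivially true), the proposition gives $\v$-NSP$(s,s,C_s)$ for every $\v\leq 1$, and in particular for the chosen $\v=\w\in[0,1]$.

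I do not anticipate any substantive obstacle. The only subtle point is recognizing that the equal size bounds on $S$ and $T$ in the $(s,s)$ case make the choice $S=T$ legitimate — this is precisely what makes the weighted condition degenerate into the unweighted one, and it is also why this equivalence holds for the diagonal parameter $(s,s)$ but not for general $(k,s)$ with $s<k$.
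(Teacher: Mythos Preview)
Your proposal is correct. For the direction $\w$-NSP$(s,s,C_s)\Rightarrow 1$-NSP$(s,s,C_s)$ you argue exactly as the paper does, by setting $S=T$ so that the weighted left-hand side collapses to $\|h_T\|_1$.

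For the reverse direction you take a different route from the paper. The paper invokes Proposition~\ref{prop:wNSP_from_NSP} with $k=s$, noting that the $(k-s)$-NSP constant is then $C_0=0$, so the formula $C(\w)=\frac{(1+\w)C_sC_{k-s}+C_s+\w C_{k-s}}{1-C_sC_{k-s}}$ reduces to $C_s$. You instead use Proposition~\ref{prop:wNSPimpliesvNSP} with $k=s$ and starting weight $1$: since $s\leq s$ holds trivially, the monotonicity in the weight immediately gives $\v$-NSP$(s,s,C_s)$ for all $\v\leq 1$. Your route is arguably more direct---it bypasses the need to recognize $C_0=0$ and to simplify the $C(\w)$ formula---while the paper's route has the minor advantage of not relying on the proof of Proposition~\ref{prop:wNSPimpliesvNSP}, whose text reads ``$s<k$'' even though the statement (and the argument) covers $s\leq k$. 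Either approach is perfectly sound here.
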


\begin{proof}
Proposition \ref{prop:wNSP_from_NSP} with $k=s$, coupled with the observation $C_0=0$, yield one direction of the equivalence. The other direction, i.e., that $\w$-NSP$(s,s,C_s)$ implies $1$-NSP$(s,s,C_s)$ follows upon picking $S=T$ for any set $T:|T|\leq s$ in the definition of the weighted null space property. 
\end{proof}

\begin{rem}
Corollary \ref{for:equivalence} in turn implies that weighted $\ell_1$-minimization recovers all $s$-sparse signals $x$ from noise-free measurements $Ax$ given a support estimate that is $50\%$ accurate if and only if $\ell_1$ minimization recovers all $s$-sparse signals from their noise-free measurements. 
\end{rem}


\section{Gaussian Matrices}\label{sec:GaussMat}
It is known that Gaussian (and more generally, sub-Gaussian) matrices satisfy the standard null space property (with high probability) provided $m>Ck\log(n/k)$. It is also known, (see, e.g., \cite[Theorem 10.11]{FoucartRauhut:2013}) that if a matrix $A\in\R^{m\times n}$ guarantees recovery of all $k$-sparse vectors $x$ via $\ell_1$ minimization \eqref{eq:L1_min}, then $m$ must exceed $c_1  k \log(\frac{n}{c_2 k})$ for some appropriate constants $c_1$ and $c_2$. The purpose of this section is to show that weighted $\ell_1$ minimization allows us recover sparse vectors beyond this bound (i.e., with fewer measurements) given relatively accurate support estimates. 

We begin with some simple observations to establish a rough lower bound on the number of measurements needed for weighted $\ell_1$ minimization. We first observe that when $k\geq s$, $\w$-NSP$(k,s,C)$ implies $1$-NSP$(s,s,C)$, i.e., the standard null space property of size $s$. This can be seen by restricting $T$ to be of size at most $s$ and setting $S=T$ in the definition of the weighted null space property. Thus, $\w$-NSP$(k,s,C)$ guarantees recovery of all $s$ sparse signals via $\ell_1$ minimization. Consequently, it requires $m\geq c_1 s \log(\frac{N}{c_2 s})$ \cite[Theorem 10.11]{FoucartRauhut:2013}. Since in weighted $\ell_1$ minimization $s$ plays the role of the size of the error in the support estimate, then one may hope (in analogy with standard compressed sensing results) that $m\approx s\log\frac{N}{s}$ suffices for recovery, given an accurate support estimate. However, even if one had a perfect support estimate, $k$ measurements are needed to directly measure the entries on the support. Combining these observations, we seek a bound on the number of measurements that scales (up to constants) like $k + s\log\frac{N}{s}$.

Indeed this can be deduced from Corollary \ref{cor:main_cor}, presented later in this section, which follows from our main technical result, Theorem~\ref{thm:gauss} (which is a more general version of Theorem \ref{thm:gauss_nonunif} in the Introduction). Corollary \ref{cor:main_cor} entails that in the case of large $m,N,k$ and small $k/N$, for a fixed support estimate $\widetilde{T}$ all $k$-sparse vectors supported on any set $T\in \Gamma_s({\widetilde{T}})$ can be recovered by solving \eqref{eq:wL1_min} when $m\gtrsim k + s\log{N/s}$.  We conclude the section with another corollary of Theorem~\ref{thm:gauss}, establishing a bound on the number of Gaussian measurements that guarantee $\w$-NSP$(k,s,C)$.

\begin{thm}\label{thm:gauss}

Let $T$ and $\widetilde{T}$ be two subsets of $\{1,...,N\}$ with $|T|\leq k$ and $|\Sset| \leq s$ and let $A$ be a random matrix with independent zero-mean unit-variance Gaussian entries. Then $A$ satisfies $\w$-NSP$(T,\widetilde{T},C)$ with probability exceeding $1-\epsilon$ provided 
\begin{align}\frac{m}{\sqrt{m+1}} \geq &  \sqrt{s +  \alpha\rho k}+ C^{-1}\sqrt{2((\w^2 - 2\w(1-\alpha))\rho k+ s)\ln(eN/k)} \nonumber\\  &+ \Big(\frac{1}{2\pi e^3}\Big)^{1/4} \sqrt{\frac{k}{\ln(eN/k)}} + \sqrt{2\ln\epsilon^{-1}}.\label{eq:m_bnd}\end{align}
\end{thm}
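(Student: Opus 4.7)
The plan is to invoke Gordon's escape-through-the-mesh theorem applied to the ``bad set''
\[ \mathcal{E} := \{h \in S^{N-1} : \w\|h_T\|_1 + (1-\w)\|h_S\|_1 > C\|h_{T^c}\|_1\}, \]
for which $A$ satisfies $\w$-NSP$(T,\widetilde{T},C)$ exactly when $\ker(A) \cap \mathcal{E} = \emptyset$. Gordon's theorem yields this event with probability at least $1-\epsilon$ as soon as $\mathbb{E}\|g\|_2 \geq w(\mathcal{E}) + \sqrt{2\ln\epsilon^{-1}}$, where $g \sim N(0, I_m)$, the Gaussian mean width is $w(\mathcal{E}) := \mathbb{E}\sup_{h \in \mathcal{E},\ \|h\|_2 \leq 1}\langle g, h\rangle$, and $\mathbb{E}\|g\|_2 \geq m/\sqrt{m+1}$ is the standard chi-type lower bound. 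Thus the theorem reduces to showing that $w(\mathcal{E})$ is bounded by the sum of the first three terms on the right of \eqref{eq:m_bnd}; the $\sqrt{2\ln\epsilon^{-1}}$ term is then absorbed automatically.

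To bound $w(\mathcal{E})$, I would partition the coordinates into four disjoint pieces $a := T \cap \widetilde{T}$, $b := T \cap \widetilde{T}^c$, $c := T^c \cap \widetilde{T}$, $d := T^c \cap \widetilde{T}^c$, and observe that $T \cup S = a \cup b \cup c$ has cardinality $|S| + \alpha\rho k \leq s + \alpha\rho k$, with complement exactly $d$. For $h \in \mathcal{E}$ with $\|h\|_2 \leq 1$, I would split $\langle g, h\rangle = \langle g_{T \cup S}, h_{T \cup S}\rangle + \langle g_d, h_d\rangle$. Cauchy--Schwarz dispatches the first summand, with expected size at most $\sqrt{s + \alpha\rho k}$. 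For the second, the NSP failure $C\|h_{T^c}\|_1 \leq \w\|h_a\|_1 + \|h_b\|_1 + (1-\w)\|h_c\|_1$ combined with a \emph{weighted} Cauchy--Schwarz yields the crucial $\ell_2$-control
\[ \|h_d\|_1 \leq \|h_{T^c}\|_1 \leq C^{-1} B \|h_{T \cup S}\|_2, \qquad B := \sqrt{(\w^2 - 2\w(1-\alpha))\rho k + s}, \]
where the identity $\w^2|a| + |b| + (1-\w)^2|c| = (\w^2 - 2\w(1-\alpha))\rho k + |S|$ gives the exact coefficient of $B^2$.

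I would then truncate each Gaussian entry: for any $t > 0$, $\langle g_d, h_d\rangle \leq t\|h_d\|_1 + \|(|g_d|-t)_+\|_2 \|h_d\|_2$. Combining this with the first summand via Cauchy--Schwarz in $\R^2$ (using $\|h_{T \cup S}\|_2^2 + \|h_d\|_2^2 \leq 1$) and Jensen's inequality produces
\[ w(\mathcal{E}) \leq \sqrt{s + \alpha\rho k} + tC^{-1}B + \sqrt{|d|\,\psi(t)}, \qquad \psi(t) := \mathbb{E}[(|g|-t)_+^2]. \]
Setting $t = \sqrt{2\ln(eN/k)}$ makes the middle term exactly the second term in \eqref{eq:m_bnd}. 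The sharp Mills-ratio estimate $\psi(t) \leq 4\phi(t)/t^3$ together with $\phi(t) = k/(eN\sqrt{2\pi})$ and $|d| \leq N$ then reduces $\sqrt{|d|\psi(t)}$ to the lower-order correction $(2\pi e^3)^{-1/4}\sqrt{k/\ln(eN/k)}$, completing the bound on $w(\mathcal{E})$. The main obstacle is performing the weighted Cauchy--Schwarz correctly to extract the coefficient $B^2 = (\w^2 - 2\w(1-\alpha))\rho k + |S|$, which hinges on the four-way decomposition of the support estimate and the particular weighting $(\w, 1, 1-\w)$; the Gaussian tail bookkeeping that yields the specific constant $(2\pi e^3)^{-1/4}$ is routine but fiddly.
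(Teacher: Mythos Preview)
Your proposal is correct and lands on essentially the same proof as the paper's: Gordon's escape through the mesh applied to the cone where $\w$-NSP fails, the four-way partition $a,b,c,d$ of the index set, the threshold $\sqrt{2\ln(eN/k)}$, and the key identity $\w^2|a|+|b|+(1-\w)^2|c|=(\w^2-2\w(1-\alpha))\rho k+|S|$ that produces the middle term.

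The one genuine difference is presentational. The paper bounds the Gaussian width via polar duality: it passes to the dual cone $\widetilde H^*$, exhibits an explicit subset $Q^t$ of it (with coordinates $\w t$, $t$, $(1-\w)t$, and $\geq -Ct$ on $a,b,c,d$), and then minimizes $\|\tilde g+z\|_2$ over $Q^t$, which gives soft-thresholding on $d$. You work directly in the primal: you split $\langle g,h\rangle$ along $T\cup S$ and $d$, control $\|h_d\|_1$ by the NSP-violation inequality plus a weighted Cauchy--Schwarz, and truncate $|g_i|$ at level $t$ on $d$. These are exactly dual to each other---your $\|(|g_d|-t)_+\|_2$ is the paper's $\big(\sum_{i\in d} S_{Ct}(\tilde g_i)^2\big)^{1/2}$, and your constant $B^2$ is the paper's $\|z_{T\cup(T^c\cap\widetilde T)}\|_2^2/t^2$. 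Your primal route is arguably more elementary (no dual cone to construct), while the paper's dual route plugs directly into the Foucart--Rauhut machinery. One minor caveat: the estimate $\psi(t)\leq 4\phi(t)/t^3$ that you invoke is the correct leading asymptotic but does not literally produce the constant $(2\pi e^3)^{-1/4}$ with a $(\ln(eN/k))^{-1/2}$ factor---it gives something slightly smaller with a $(\ln(eN/k))^{-3/4}$ factor, which of course still implies the stated bound. The paper instead uses the cruder but cleaner $\psi(t)\leq (2/(\pi e))^{1/2}e^{-t^2/2}/t^2$ from \cite[Section 9.2]{FoucartRauhut:2013} to get the constant exactly as stated.
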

\proof Our proof will be a modified version of the analogous proof for the standard null space property for Gaussian matrices \cite{Rudelson07},  cf., \cite{FoucartRauhut:2013}, \cite{CRPW:2012}. Define the set \[H_{T,\widetilde{T}} := \big\{ h \in \R^{N}: \w\|h_{T}\|_1 + (1-\w) \| h_{\Sset}\|_1  \geq C \| h_{T^c} \|_1\big \} .\] 
Our aim is to show that for a random Gaussian matrix $A$ with zero-mean and unit variance entries  $\inf\limits_{h\in H_{T,\widetilde{T}}} \|Ah\|_2>0$ with high probability. This will show that there are no vectors from $H_{T,\widetilde{T}}$ in the null space of $A$, i.e.,  that the weighted null space property holds over $T$ and $\widetilde{T}$. To this end, we will utilize Gordon's escape through the mesh theorem \cite{EscapeMesh:1988}, as in \cite{Rudelson07},  cf., \cite{FoucartRauhut:2013}. In short, the theorem states that for an $m\times N$ Gaussian matrix with zero-mean  and unit-variance entries and for an arbitrary set $V\subset S^{N-1}$, 
\begin{equation}P\big(\inf\limits_{v\in V} \|Av\|_2 \leq \frac{m}{\sqrt{m+1}} - \ell(V) - a\big) \leq e^{-a^2/2},\label{eq:Gordon}\end{equation} where $\ell(V):= \mathbb{E}_{g\sim \mathcal{N}(0,I_N)} \sup\limits_{v\in V} \langle  g, v\rangle$ is the Gaussian width of $V$, cf. \cite{CRPW:2012}. So we must estimate $\ell(H_{T,\widetilde{T}}\cap S^{N-1})$. 
Note that $H_{T,\widetilde{T}}\cap S^{N-1}$ is compact so the supremum in the definition of $\ell(H_{T,\widetilde{T}}\cap S^{N-1})$ can be replaced by a maximum. Moreover, note that
  \[ \max_{h\in H_{T,\widetilde{T}}\cap S^{N-1}}  \langle g, h\rangle = \max_{h \in H_{T,\widetilde{T}}\cap S^{N-1}\cap\{h:\ h_i\geq 0\}}~\sum_{i=1}^{N}  |g_i| h_i .\]
Define the vector $\tilde{g}$ with entries $\tilde{g}_i = |g_i|$, $i\in\{1,...,N\}$, the convex cone $\widetilde{H}_{T,\widetilde{T}}=H_{T,\widetilde{T}} \cap \{h\in \R^N: h_i \geq 0\}$, and its dual $\widetilde{H}^*_{T,\widetilde{T}}:=\{ z\in \R^N: \langle z,h\rangle \geq 0  \textrm{ for all } h \in \widetilde{H}_{T,\widetilde{T}}\}$. We may now use duality (see, e.g., \cite{FoucartRauhut:2013}[B.40]) to conclude that 
  \[ \max_{h\in \widetilde H_{T,\widetilde{T}}}  \langle \tilde{g}, h\rangle \leq \min_{z\in \widetilde H^*_{T,\widetilde{T}}} \| \tilde{g} + z\|_2.\]
  To bound the right hand side from above, we introduce for $t\geq0$ (to be determined later) the set 
  \[
  Q^t_{T,\widetilde{T}}:=\{z\in \R^N: 
 \left\{	\begin{array}{l}
  z_i = \w t \textrm{ for } i\in{T \cap \widetilde{T}}, \\
  z_i = t \textrm{ for } i\in{T \cap \widetilde{T}^c}, \\
  z_i = (1 - \w)t \textrm{ for } i\in{T^c \cap \widetilde{T}}, \\
  z_i \geq -C t \textrm{ for } i\in T^c\cap\widetilde{T}^c 
  \end{array}\right.\}
  \] 
  and we observe that for any two vectors $z\in Q^t_{T,\widetilde{T}}$ and $h\in \widetilde{H}_{T,\widetilde{T}}$
\begin{align}
\sum_{i=1}^N z_i h_i & = t\w \sum_{i\in T\cap \widetilde{T}}h_i + t\sum_{i\in T\cap \widetilde{T}^c}h_i + t(1-\w)\sum_{i\in T^c\cap \widetilde{T}}h_i +\sum_{i\in T^c\cap \widetilde{T^c}}z_ih_i\\ 
&=t\w \|h_{T\cap \widetilde{T}}\|_1+t(1-\w)\|h_{\Sset}\|_1 + t\w \|h_{T\cap \widetilde{T}^c}\|_1 +\sum_{i\in T^c\cap \widetilde{T}^c}z_ih_i \\ 
&= t(\w \|h_T\|_1+(1-\w)\|h_{\Sset}\|_1)  +\sum_{i\in T^c\cap \widetilde{T}^c}z_ih_i \\
&\geq t(\w \|h_T\|_1+(1-\w)\|h_{\Sset}\|_1)  -C t \|h_{T^c \cap \widetilde{T}^c}\|_1\\
&\geq t(\w \|h_T\|_1+(1-\w)\|h_{\Sset}\|_1   -C  \|h_{T^c \cap \widetilde{T}^c}\|_1)\geq 0. 
\end{align} 
Hence $Q^t_{T,\widetilde{T}} \subset \widetilde{H}^*_{T,\widetilde{T}}$ and so for any $t\geq 0$
  \[ \max_{h\in H_{T,\widetilde{T}}\cap S^{N-1}}  \langle g, h\rangle \leq \min_{z\in \widetilde H^*_{T,\widetilde{T}}} \| \tilde{g} + z\|_2 \leq \min_{z\in  Q^t_{T,\widetilde{T}}} \| \tilde{g} + z\|_2.\]
Taking expectations and defining $S_{C t}:\R \to \R$, the soft-thresholding operator with $S_\lambda(x) = \sgn(x) (\max(|x|-\lambda/2)) $, we have 
\begin{align}\label{eq:gausswidth}
\ell(H_{T,\widetilde{T}}\cap S^{N-1})  \leq &\mathbb{E}\min_{z\in  Q^t_{T,\widetilde{T}}} \| \tilde{g} + z\|_2 \nonumber\\
\leq &\mathbb{E} \| \tilde{g}_{T \cup (T^c\cap\widetilde{T})} + z_{T \cup (T^c\cap\widetilde{T})}\|_2 +  \mathbb{E}\min_{z\in  Q^t_{T,\widetilde{T}}} \| \tilde{g}_{ (T^c\cap\widetilde{T}^c)} + z_{(T^c\cap\widetilde{T}^c)}\|_2 \nonumber\\
\leq  &\mathbb{E}\|\tilde{g}_{T \cup (T^c\cap\widetilde{T})}\|_2 + \big(\sqrt{\w^2 \alpha \rho k+ (1 - \w)^2 (1-\alpha)\rho k + (1 - \alpha\rho)k}\big)t \nonumber\\
&+ \ \mathbb{E}\min_{z_i \geq -C t } \Big(\sum_{i\in T^c\cap \widetilde{T}^c} (\tilde{g}_i + z_i)^2\Big)^{1/2} \nonumber\\
\leq &\sqrt{(1 + \rho - \alpha\rho)k}+ \big(\sqrt{\w^2 \alpha \rho k+ (1 - \w)^2 (1-\alpha)\rho k + (1 - \alpha\rho)k}\big)t  \nonumber\\
& + \ \mathbb{E} \Big(\sum_{i\in T^c\cap \widetilde{T}^c} S_{C t}(\tilde{g}_i)^2\Big)^{1/2}\nonumber\\
 =  &\sqrt{s +  \alpha\rho k}+ \big(\sqrt{(\w^2 - 2\w(1-\alpha))\rho k+ s}\big)t   +  \mathbb{E} \Big(\sum_{i\in T^c\cap \widetilde{T}^c} S_{C t}(\tilde{g}_i)^2\Big)^{1/2}\nonumber\\
\leq& \sqrt{s +  \alpha\rho k}+ \big(\sqrt{(\w^2 - 2\w(1-\alpha))\rho k+ s}\big)t \nonumber   \\  &+  \Big( (N-s - \alpha\rho k) \big(\frac{2}{\pi e}\big)^{1/2} \frac{e^{-(C t)^2/2} }{(C t)^2} \Big)^{1/2}.
\end{align}

Above, the third inequality utilizes the facts that $z_i=\w t$ when $i\in T\cap \widetilde{T} $ where $|T\cap \widetilde{T} | = \alpha\rho k $,  $z_i=(1 - \w)t$ when $i\in (T^c \cap \widetilde{T})$ where $|T^c \cap \widetilde{T}| = (1 - \alpha\rho)k$, and that $z_i=t$ when $i\in (T \cap \widetilde{T}^c)$ where $|(T^c \cap \widetilde{T})| = (1 - \alpha)\rho k$. 
Moreover, the fourth inequality follows from the well known bound on standard Gaussian random vectors $v\in\R^m$, namely $\mathbb{E}\|v\|_2 \leq \sqrt{m}$ and a straightforward computation for the soft-thresholding term. Similarly the fifth inequality results from a direct computation of $\mathbb{E}S_t(v)^2$ where $v$ is a standard Gaussian vector (see \cite[Section 9.2]{FoucartRauhut:2013} for the details). 

Picking $t=C^{-1}\sqrt{2\ln(eN/k)}$, we have
\begin{align*}\ell(H_{T,\widetilde{T}}\cap S^{N-1})  \leq &\sqrt{s +  \alpha\rho k}+ C^{-1}\sqrt{2((\w^2 - 2\w(1-\alpha))\rho k+ s)\ln(eN/k)} \\ & + \Big(\frac{1}{2\pi e^3}\Big)^{1/4} \sqrt{\frac{k}{\ln(eN/k)}}. \end{align*}
We now apply Gordon's escape through the mesh theorem  \eqref{eq:Gordon} to deduce
\[P\big(\inf\limits_{v\in V} \|Av\|_2 \leq \frac{m}{\sqrt{m+1}} - \ell(H_{T,\widetilde{T}}\cap S^{N-1}) - a\big) \leq e^{-a^2/2}.\]
Choosing $a = \sqrt{2\ln\epsilon^{-1}}$, we obtain $\w$-NSP$(T,\widetilde{T}, C)$, with probability exceeding $1-\epsilon$ provided 
\begin{align*}\frac{m}{\sqrt{m+1}} &\geq  \sqrt{s +  \alpha\rho k}+ C^{-1}\sqrt{2((\w^2 - 2\w(1-\alpha))\rho k+ s)\ln(eN/k)} \\& + \Big(\frac{1}{2\pi e^3}\Big)^{1/4} \sqrt{\frac{k}{\ln(eN/k)}} + \sqrt{2\ln\epsilon^{-1}}.\end{align*}

\qed

\begin{rem} Notice that the term 
\[
 	(\w^2 - 2\w(1-\alpha))\rho k+ s = \left\{
	\begin{array}{ll}
	s \equiv (1 + \rho - 2\alpha\rho)k & \textrm{, when } \w = 0 \\
	k & \textrm{, when } \w = 1
	\end{array}
	\right.
\]
Moreover, the inequality $\alpha\rho \leq 1$ holds. 
\end{rem}

\begin{rem}[Choice of weights] The Gaussian width expression in \eqref{eq:gausswidth} allows us to choose the weight $\w$ that minimizes the upper bound on $\ell(H_{T,\widetilde{T}}\cap S^{N-1})$, in turn leading to the smallest bound on the number of measurements -- as can also be seen in \eqref{eq:m_bnd}. A simple calculation shows that this weight is $\w = 1 - \alpha$. In fact, choosing $\w\in(1-2\alpha, 1)$ guarantees that the second term in \eqref{eq:m_bnd} is smaller than the analogous term for standard $\ell_1$-minimization, i.e., $(\w^2 - 2\w(1-\alpha))\rho k+ s<k$. 
Applying Proposition \ref{prop:wNSPimpliesvNSP} shows that when ${\alpha} > 0.5$, recovery can be performed with any $\w \in (0, 1-{\alpha}]$ with this number of measurements.
\end{rem}

\begin{rem} The condition in Theorem \ref{thm:gauss_nonunif}, i.e., 
\[\frac{m}{\sqrt{m+1}} \geq  \sqrt{k+s}+ C^{-1}\sqrt{2(\w^2k + s)\ln(eN/k)}  + \Big(\frac{1}{2\pi e^3}\Big)^{1/4} \sqrt{\frac{k}{\ln(eN/k)}} + \sqrt{2\ln\epsilon^{-1}},\]
 is a stronger version of the analogous condition in Theorem \ref{thm:gauss}, simplifying the dependence of $m$ on $s$, the size of error in the support estimate. It is obtained by replacing $\alpha\rho$ by 1 and $(\w^2 - 2\w(1-\alpha))\rho k$ by $\w^2 k$ (as these are both upper bounds). 
\end{rem}

\begin{cor}\label{cor:main_cor}
Let  $\widetilde{T}$ be a subset of $\{1,...,N\}$ and let $A$ be a random matrix with independent zero-mean unit-variance Gaussian entries. Then,  with probability exceeding $1-\epsilon$, $A$ satisfies $\w$-NSP$(T,\widetilde{T},C)$ for all sets $T\subset\{1,..,,N\}$ with $|T|\leq k\leq N/2$ and $|\Sset| \leq s\leq k$ provided 
\begin{align}\label{eq:bnd_on_m}\frac{m}{\sqrt{m+1}} \geq  &\Big(1+\frac{1}{(2\pi e^3)^{1/4}\sqrt{\ln(eN/k)}}\Big)\sqrt{k+s}+ C^{-1}\sqrt{2(\w^2 k+s)\ln(eN/k)} \nonumber\\&  + \sqrt{2\ln\epsilon^{-1}+(s+1)\ln(eN/s)+k}.\end{align}
\end{cor}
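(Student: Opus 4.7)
The plan is to derive Corollary~\ref{cor:main_cor} from Theorem~\ref{thm:gauss} by a union bound over admissible supports, followed by bookkeeping that collapses the $(\alpha,\rho)$-dependent expressions into the uniform form \eqref{eq:bnd_on_m}. With $\widetilde{T}$ fixed, I would work with the family $\mathcal{T}_{k,s}:=\{T\subset\{1,\ldots,N\}:|T|\leq k,\ |\Sset|\leq s\}$. Since Theorem~\ref{thm:gauss} is proved by applying Gordon's inequality \eqref{eq:Gordon} with slack $\sqrt{2\ln\epsilon^{-1}}$ to the set $H_{T,\widetilde{T}}\cap S^{N-1}$ for each fixed $T$, I would re-run that argument for every $T\in\mathcal{T}_{k,s}$ and union-bound the failure events. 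Forcing the total failure probability $|\mathcal{T}_{k,s}|\,e^{-a^{2}/2}$ to be at most $\epsilon$ replaces the slack $\sqrt{2\ln\epsilon^{-1}}$ in Theorem~\ref{thm:gauss} by $\sqrt{2\ln|\mathcal{T}_{k,s}|+2\ln\epsilon^{-1}}$.

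The next step is to estimate $|\mathcal{T}_{k,s}|$. Writing each admissible $T$ as $T=(\widetilde{T}\setminus D_{1})\cup D_{2}$ with $D_{1}\subset\widetilde{T}$, $D_{2}\subset\widetilde{T}^{c}$, and $|D_{1}|+|D_{2}|\leq s$, the Vandermonde identity gives $|\mathcal{T}_{k,s}|\leq\sum_{j=0}^{s}\binom{N}{j}\leq(s+1)(eN/s)^{s}$, while non-emptiness of $\mathcal{T}_{k,s}$ forces $|\widetilde{T}|\leq k+s\leq 2k$. Careful bookkeeping of these two ingredients absorbs the union-bound slack into the closed form $(s+1)\ln(eN/s)+k$ that appears under the square root in \eqref{eq:bnd_on_m}. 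To handle the remaining $(\alpha,\rho)$-dependent terms I would invoke the two inequalities noted in the remark following Theorem~\ref{thm:gauss}: $\alpha\rho\leq 1$, which yields $\sqrt{s+\alpha\rho k}\leq\sqrt{s+k}$, and $(\w^{2}-2\w(1-\alpha))\rho k\leq \w^{2}k$, which reduces the second radicand to $(\w^{2}k+s)\ln(eN/k)$. Finally, the leftover $(2\pi e^{3})^{-1/4}\sqrt{k/\ln(eN/k)}$ merges with $\sqrt{s+k}$ via $\sqrt{k}\leq\sqrt{k+s}$, producing the prefactor $1+(2\pi e^{3})^{-1/4}/\sqrt{\ln(eN/k)}$.

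The main obstacle is the precise packaging of the union-bound slack into $(s+1)\ln(eN/s)+k$. A naive union bound over all $T$ with $|T|\leq k$ would cost roughly $k\ln(eN/k)$, which is too large and would defeat the purpose of the corollary; it is the Hamming-ball structure imposed by $|\Sset|\leq s$ that localizes $T$ around $\widetilde{T}$ and yields an $s\ln(eN/s)$ scaling, with a residual linear-in-$k$ contribution controlled by $|\widetilde{T}|\leq 2k$. Verifying this counting, and checking that the inequality $(\w^{2}-2\w(1-\alpha))\rho k\leq\w^{2}k$ holds across the entire uniform range of $\rho$ (in particular the regime $|\widetilde{T}|>k$, where $\rho>1$), is the most delicate part of the argument.
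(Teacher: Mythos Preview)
Your plan is correct and essentially the same as the paper's: apply Theorem~\ref{thm:gauss} for each admissible $T$, union-bound, then collapse the $(\alpha,\rho)$-terms using $\alpha\rho\leq 1$ and $(\w^2-2\w(1-\alpha))\rho k\leq \w^2 k$, and absorb the residual $\sqrt{k/\ln(eN/k)}$ term into the $\sqrt{k+s}$ prefactor exactly as you describe.

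The one genuine difference is the combinatorial counting of $|\mathcal{T}_{k,s}|$. The paper enumerates by $i=|T\cap\widetilde{T}^c|$, obtaining $\sum_{i=0}^{s}\binom{N-p}{i}\binom{p}{k-i}$ with $p=|\widetilde{T}|$, and then bounds $\binom{p}{k-i}\leq\binom{2k}{k}$ (this is where $p\leq 2k$ is used, and where the additive $+k$ in the slack comes from). Your symmetric-difference parametrization $T=(\widetilde{T}\setminus D_1)\cup D_2$ together with Vandermonde gives the cleaner bound $|\mathcal{T}_{k,s}|\leq\sum_{j\leq s}\binom{N}{j}$ directly, and in fact does \emph{not} need the $+k$ at all: your slack is only $(s+1)\ln(eN/s)$, so the $|\widetilde{T}|\leq 2k$ observation, while correct, is not actually used in your bound. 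Your route thus yields a marginally stronger inequality than \eqref{eq:bnd_on_m} and avoids the somewhat delicate estimate $\ln\binom{2k}{k}\leq k$ implicit in the paper's bookkeeping.

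Regarding your stated obstacle: the inequality $(\w^2-2\w(1-\alpha))\rho k\leq \w^2 k$ does hold uniformly. It is immediate for $\rho\leq 1$; for $\rho>1$ it reduces (after dividing by $\w$) to $\w(\rho-1)\leq 2(1-\alpha)\rho$, and since $\alpha\rho\leq 1$ forces $1-\alpha\geq(\rho-1)/\rho$, the right side is at least $2(\rho-1)\geq \w(\rho-1)$. So this step is not a gap.
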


\begin{proof}
The proof consists of bounding, from above, the number of sets $T$ satisfying the hypotheses of the corollary, and applying Theorem \ref{thm:gauss} in conjunction with a union bound. Denote $p:= |\widetilde{T}| $, then the number of sets $T$ with $|\Sset| \leq s$ is bounded above by $\sum\limits_{i=0}^{s} {{N-p}\choose{i}}{{p}\choose{k-i}}$. This bound is obtained by letting $|T\cap \widetilde{T}^c|$ range from $0$ to $s$ and some simple observations. First, there are ${{N-p}\choose{i}}$ ways to arrange the $i$ entries of $T\cap\widetilde{T}^c$ in $\widetilde{T}^c$. Second, there remain $k-i$ entries of $T$, to be ``placed" in $\widetilde{T}$ and there are ${{p}\choose{k-i}}$ ways to do this. In addition, $p\leq 2k$ as $|T\cap \widetilde{T}|\leq k$ and $|T^c \cap \widetilde{T}| \leq s \leq k$, so 
combining these observations we obtain \[\sum\limits_{i=0}^{s} {{N-p}\choose{i}}{{p}\choose{k-i}} \leq {{N-p}\choose{s}}\sum\limits_{i=0}^{s}{p\choose {k-i}}\leq  (s+1){{N-p}\choose{s}}{{2k}\choose {k}}.\]
Specifically, the last inequality is due to the observation that $p\leq 2k$ and the bound $ {{2k}\choose{k-i}}\leq{{2k}\choose{k}}$.

Thus, applying Theorem \ref{thm:gauss} with $\frac{\epsilon}{(s+1) {{N-p}\choose{s}}{{2k}\choose {k}} }$ in place of $\epsilon$, and a union bound, we conclude that the probability that at least one set $T$ with $|\Sset|\leq s$ violates $\w$-NSP$(T,\widetilde{T}, C)$ is bounded above by $\epsilon$. Note the bound \[\ln\Big( (s+1) {{N-p}\choose{s}}{{2k}\choose {k}} \Big) \leq  (s+1)\ln(eN/s)+k,\] which finishes the proof. 
\end{proof}

\begin{rem}
In the limiting case of large $m,N,k$ with small $k/N$ the condition \eqref{eq:bnd_on_m} simplifies to
\[ m \geq   (\sqrt{k+s}+ C^{-1}\sqrt{2(\w^2 k+s)\ln(eN/k)}   + \sqrt{2\ln\epsilon^{-1}+(s+1)\ln(eN/s)+k})^2,\]
which reveals the benefit of using weighted $\ell_1$-minimization in reducing the number of measurements. In particular, taking $\w=0$ leads to the bound 
\[ m \gtrsim   k + {s\ln(eN/s)},\]
which is essentially as good as one can hope for; one needs $k$ measurements to measure the non-zero entries even if the support was fully known, and about $s\ln(eN/s)$ measurements to recover the entries where the support estimate was erroneous.  
\end{rem}

\begin{rem}
In the case $s=0$ (i.e., perfect support estimation), we have $\widetilde{T}=T$ hence the union bound in the proof above is only over one set, which leads to the better condition (when $\w =0$, and in the limit of large $k,N$) $m>k$. This is to be expected as when the support is known and the weight on the support set $T$ is zero, sparse recovery via weighted $\ell_1$-minimization just requires that the matrix $A_T$ is invertible. In the case of Gaussian matrices, this occurs with probability $1$ if $A_T$ is a square matrix. 
\end{rem}

To simplify the remaining discussion, we introduce one more definition and lemma, which will facilitate the proof of our final result, Corollary \ref{cor:cor_uniform}. 
\begin{defn}
We say that an $m\times n$ matrix $A$ satisfies $\w$-NSP$^*(k,s,C)$ if for all sets $T:|T|=k$ and $S:|S|=s$ with $S\subset T$, and for all vectors $h:Ah=0$ we have $\w \|h_T\|_1 + (1-\w)\|h_S\|_1 \leq C \|h_{T^c}\|_1$.
\end{defn}
\begin{lemma}\label{lem:helper}
$\w$-NSP$(k,s,C)$ is equivalent to $\w$-NSP$^*(k,s,C)$.
\end{lemma}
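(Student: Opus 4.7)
\textbf{Plan for Lemma \ref{lem:helper}.} The plan is to establish the two implications. The direction $\w$-NSP$(k,s,C) \Rightarrow \w$-NSP$^*(k,s,C)$ is immediate, since NSP$^*$ only asks the defining inequality on the subfamily of pairs $(T,S)$ with $|T|=k$, $|S|=s$, and $S\subset T$, which is a subset of what NSP quantifies over. So the real work is in the reverse implication.

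For NSP$^* \Rightarrow$ NSP, I would fix an arbitrary $h$ in the null space of $A$ together with arbitrary sets $T, S$ with $|T|\leq k$ and $|S|\leq s$. The key step is the familiar ``worst-case reduction'': let $T^{\ast}$ be an index set of size exactly $k$ picking out the $k$ largest-magnitude entries of $h$, and let $S^{\ast}$ be an index set of size exactly $s$ picking out the $s$ largest-magnitude entries, chosen within $T^{\ast}$. This choice is possible precisely because we are in the regime $s\leq k$ that is relevant throughout the paper, and it ensures $S^{\ast}\subset T^{\ast}$, $|T^{\ast}|=k$, $|S^{\ast}|=s$, so that $(T^{\ast},S^{\ast})$ is admissible for NSP$^*$.

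Next I would verify the two monotonicity comparisons that let me transfer the NSP$^*$ inequality at $(T^{\ast},S^{\ast})$ back to $(T,S)$. By the extremal choice of $T^{\ast}$ and $S^{\ast}$ we have $\|h_T\|_1\leq \|h_{T^{\ast}}\|_1$ and $\|h_S\|_1\leq \|h_{S^{\ast}}\|_1$, so that
\[
\w\|h_T\|_1 + (1-\w)\|h_S\|_1 \;\leq\; \w\|h_{T^{\ast}}\|_1 + (1-\w)\|h_{S^{\ast}}\|_1.
\]
For the right-hand side of the NSP inequality, the identity $\|h_{T^c}\|_1=\|h\|_1-\|h_T\|_1$ combined with $\|h_T\|_1\leq \|h_{T^{\ast}}\|_1$ gives $\|h_{T^c}\|_1\geq \|h_{T^{\ast c}}\|_1$. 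Applying the hypothesis $\w$-NSP$^*(k,s,C)$ to the pair $(T^{\ast},S^{\ast})$ and chaining these three inequalities produces $\w\|h_T\|_1+(1-\w)\|h_S\|_1\leq C\|h_{T^c}\|_1$, which is exactly $\w$-NSP$(k,s,C)$.

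The argument is largely bookkeeping; the only mild subtlety is the containment $S^{\ast}\subset T^{\ast}$, which should be handled by explicitly choosing $S^{\ast}$ among the top-$s$ entries inside $T^{\ast}$ and noting the assumption $s\leq k$ that is in force throughout this section.
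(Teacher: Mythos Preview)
Your proposal is correct and follows essentially the same approach as the paper's proof: both note that the forward implication is immediate, and for the reverse implication both reduce to the extremal pair $(T^{\ast},S^{\ast})$ indexing the top $k$ and top $s$ entries of $h$ (so that $S^{\ast}\subset T^{\ast}$ since $s\leq k$), apply $\w$-NSP$^{*}$ there, and transfer the inequality to arbitrary $(T,S)$ via the monotonicity comparisons $\|h_T\|_1\leq\|h_{T^{\ast}}\|_1$, $\|h_S\|_1\leq\|h_{S^{\ast}}\|_1$, and $\|h_{T^c}\|_1\geq\|h_{T^{\ast c}}\|_1$. The only cosmetic difference is that the paper remarks that $|T^{\ast}|=k$ and $|S^{\ast}|=s$ must hold for nonzero $h$ (else $h_{T^{\ast c}}=0$ and NSP$^{*}$ fails), whereas you simply pad $T^{\ast}$ and $S^{\ast}$ to the required sizes, which is equally valid.
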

\begin{proof}
The fact that $\w$-NSP$(k,s,C)$ implies $\w$-NSP$^*(k,s,C)$ follows directly from the definition of the former. To see the reverse implication note that $\w$-NSP$^*(k,s,C)$ implies that for any  $h:Ah=0$ and sets $T^*=T^*(h)$ and $S^*=S^*(h)$ indexing the largest $k$ and $s$ entries (in modulus) of $h$, we have  
$\w \|h_{T^*}\|_1 + (1-\w)\|h_{S^*}\|_1 \leq C \|h_{{T^*}^c}\|_1$. Note that $|T^*|=k$ and $|S^*|=s$, otherwise $h_{{T^*}^c}=0$ and the property fails. Moreover for the vector $h$, $T^*$ and $S^*$ maximize the left-hand side and minimize the right hand side over all choices of $T: |T|\leq k$ and $S:|S|\leq s$.  This argument works for all vectors in the null space so we are done. 
\end{proof}

\begin{cor}\label{cor:cor_uniform}
Let $A$ be a random matrix with independent zero-mean unit-variance Gaussian entries. Then,  with probability exceeding $1-\epsilon$, $A$ satisfies $\w$-NSP$(k,s,C)$ for all sets $T, \widetilde{T}\subset\{1,..,,N\}$ with $|T|\leq k\leq N/2$ and $|\Sset| \leq s\leq k$ provided 
\begin{align}\label{eq:bnd_on_m}\frac{m}{\sqrt{m+1}} \geq  &\Big(1+\frac{1}{(2\pi e^3)^{1/4}\sqrt{\ln(eN/k)}}\Big)\sqrt{k+s}+ C^{-1}\sqrt{2(\w^2 k+s)\ln(eN/k)} \nonumber\\&  + \sqrt{2\ln\epsilon^{-1}+2k\ln(eN/k)}.\end{align}
\end{cor}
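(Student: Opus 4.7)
The plan is to mirror the argument of Corollary \ref{cor:main_cor}, but now exploit Lemma \ref{lem:helper} to keep the combinatorial overhead under control. By Lemma \ref{lem:helper}, $\w$-NSP$(k,s,C)$ is equivalent to $\w$-NSP$^*(k,s,C)$, so it suffices to show that with the claimed probability, for every pair $(T,S)$ with $|T|=k$, $|S|=s$, and $S\subset T$, every $h$ in the null space of $A$ satisfies $\w\|h_T\|_1+(1-\w)\|h_S\|_1\le C\|h_{T^c}\|_1$.

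For each admissible pair $(T,S)$, I would apply Theorem \ref{thm:gauss} to the specific support estimate $\widetilde T:=T\setminus S$. A direct check shows $(\widetilde T\cap T^c)\cup(\widetilde T^c\cap T)=S$, so the conclusion of Theorem \ref{thm:gauss} is exactly the pointwise inequality we need. In the notation of that theorem, $\alpha=1$ and $\rho k=k-s$; plugging these in, the first summand becomes $\sqrt{s+\alpha\rho k}=\sqrt{k}\le\sqrt{k+s}$ and the exponent in the second summand becomes $(\w^2-2\w(1-\alpha))\rho k+s=\w^2(k-s)+s\le\w^2 k+s$. Hence the bound from Theorem \ref{thm:gauss} is dominated term-by-term by the first two $(T,S)$-independent summands in \eqref{eq:bnd_on_m}.

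I would then take a union bound over the $\binom{N}{k}\binom{k}{s}$ admissible pairs, applying Theorem \ref{thm:gauss} with $\epsilon/\bigl[\binom{N}{k}\binom{k}{s}\bigr]$ in place of $\epsilon$. The final Gordon-based term $\sqrt{2\ln\epsilon^{-1}}$ thereby inflates to $\sqrt{2\ln\epsilon^{-1}+2\ln\bigl(\binom{N}{k}\binom{k}{s}\bigr)}$. Since $k\le N/2$, the inequality $\binom{k}{s}\le 2^k\le\binom{2k}{k}\le\binom{N}{k}$ is readily verified, so $\binom{N}{k}\binom{k}{s}\le\binom{N}{k}^2$ and $\ln\bigl(\binom{N}{k}\binom{k}{s}\bigr)\le 2k\ln(eN/k)$. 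Once these estimates are absorbed, the final $\sqrt{2\ln\epsilon^{-1}+2k\ln(eN/k)}$ summand of \eqref{eq:bnd_on_m} emerges.

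The main obstacle is the size of the quantifier: unlike Corollary \ref{cor:main_cor}, both $T$ and $\widetilde T$ vary here. Naively union-bounding over all pairs $(T,\widetilde T)$ would be much costlier. Lemma \ref{lem:helper} circumvents this by restricting attention to pairs with $S\subset T$, whose logarithmic count is only of order $k\ln(eN/k)$; this is what preserves the scaling announced in \eqref{eq:bnd_on_m}. Everything else is a direct combination of Theorem \ref{thm:gauss} with the union bound.
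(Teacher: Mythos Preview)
Your argument is essentially the paper's own proof: invoke Lemma~\ref{lem:helper} to reduce $\w$-NSP$(k,s,C)$ to $\w$-NSP$^*(k,s,C)$, apply Theorem~\ref{thm:gauss} with $\widetilde T=T\setminus S$ for each of the $\binom{N}{k}\binom{k}{s}$ admissible pairs $(T,S)$, and union-bound. There is one small arithmetic slip at the end: your estimate $\ln\bigl[\binom{N}{k}\binom{k}{s}\bigr]\le 2k\ln(eN/k)$, once multiplied by the factor $2$ already sitting inside the square root, produces $4k\ln(eN/k)$ rather than the $2k\ln(eN/k)$ appearing in \eqref{eq:bnd_on_m}; the sharper route is to bound $\ln\binom{k}{s}\le k\ln 2$ directly and absorb it, which is closer to (though still slightly above) the constant the paper records.
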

\begin{proof}
The proof is similar to the proof of the previous corollary, albeit simpler. By Proposition \ref{lem:helper} we only need to verify $\w$-NSP$^*(k,s,C)$, i.e.,  the null space property pertaining to sets $S:=\Sset \subset T$. There are ${N}\choose{k}$ ways of selecting $T$ and ${k}\choose{s}$ ways of selecting $S\subset T$. We thus apply Theorem \ref{thm:gauss} with $\epsilon$ replaced with $\frac{\epsilon}{ {{N}\choose{k}} {{k}\choose{s}}}$ to complete the proof.
\end{proof}
\begin{rem}
In the limiting case, we see that to obtain $\w$-NSP$(k,s,C)$, we still need a ``baseline"  of $\approx k\ln(eN/k)$ measurements. Nevertheless, this number increases like $C^{-1}(w^2 k+s)\ln{eN/k}$, rather than as $C^{-1}k \ln{eN/k}$ as in the case of $\ell_1$ minimization. Thus, robustness to noise (which improves as $C$ decreases, see, e.g., \cite{FoucartRauhut:2013}) costs fewer measurements in the case of weighted $\ell_1$-minimization, provided the support estimate is relatively accurate. 
\end{rem}

\section{Numerical Examples}\label{sec:results}
In this section, we present phase transition diagrams to illustrate the theoretical results presented above. 
We set $N = 500$ and we draw measurement matrices $A$ with zero mean i.i.d. standard Gaussian random entries of dimensions $m \times N$ where we vary $m$ between 50 and 250 with an increment of 25. We then generate $k$-sparse signals $x\in\R^N$, and vary $k$ between $\lfloor \frac{m}{10} \rfloor$ and $\lfloor \frac{m}{2} \rfloor$. 
The nonzero values in $x$ are drawn independently from a standard Gaussian distribution. 

We generate 50 instances of $A$ and $x$. For each instance, we compute the measurement vector $y = Ax$ and compare the recovery performance using both $\ell_1$ and weighted $\ell_1$ minimization.
Support estimate sets $\widetilde{T}$ of size $k$ with accuracies $\alpha \in \{0.1, 0.3, 0.7, 1\}$ are generated such that $\alpha k$ entries of $\widetilde{T}$ are chosen at random from the support of $x$. The remaining $(1-\alpha)k$ entries of $\widetilde{T}$ are chosen from outside the support of $x$. A weight $\w = 1-\alpha$ is applied to the set $\widetilde{T}$ for every run of weighted $\ell_1$ minimization.

\begin{figure}
\centering
\includegraphics[width=3.5in]{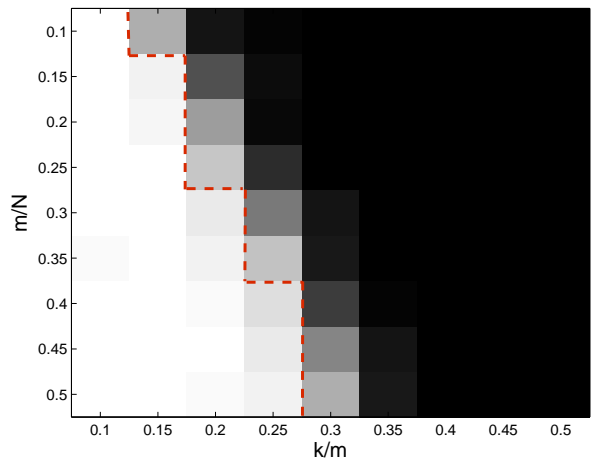}
\caption{Phase transition diagram showing the exact recovery rate using standard $\ell_1$ minimization for 50 instances of the sparse vector $x$ and measurement matrix $A$. The black region indicates a zero recovery rate, while the white region indicates full recovery with rate equal to 1. The dashed red line indicates the empirical recovery rate threshold that exceeds $0.85$.} \label{fig:PT_L1}  
\end{figure}
Fig.~\ref{fig:PT_L1} shows the phase transition diagram illustrating the exact recovery rate over 50 experiments using $\ell_1$ minimization for the recovery of $x$ from measurements $y = Ax$. The recovery rate transitions from a value of 1 (white region) indicating exact recovery for all 50 experiments to a value of zero (black region) indicating errors in the recovery. We highlighted the 0.85 empirical recovery rate threshold by a dashed red line, to compare the recovery performance with weighted $\ell_1$ minimization.

\begin{figure}
\centering
	\hbox{
	\subfigure[$\alpha = 0.1$]{\includegraphics[width=3in]{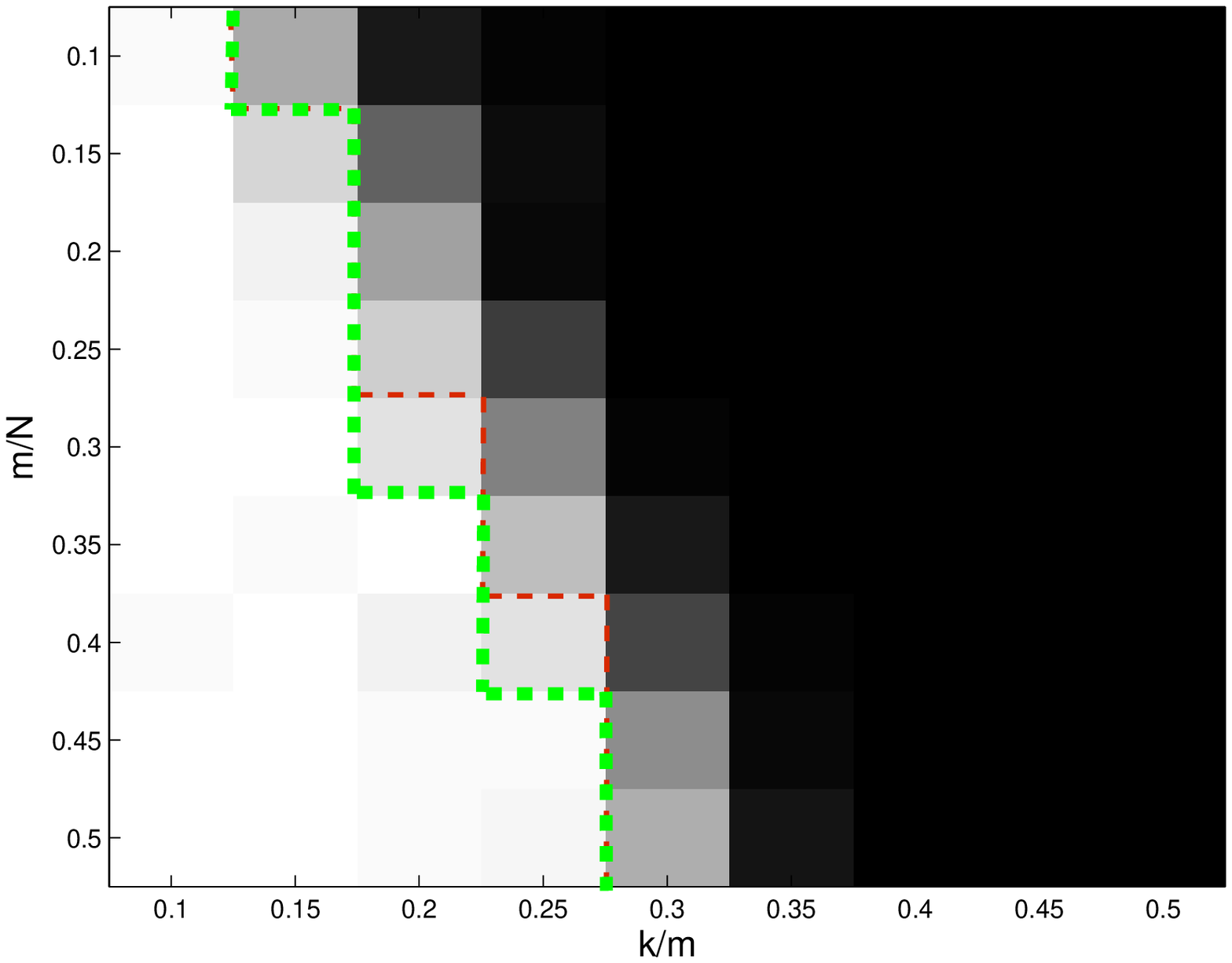}}
	\subfigure[$\alpha = 0.3$]{\includegraphics[width=3in]{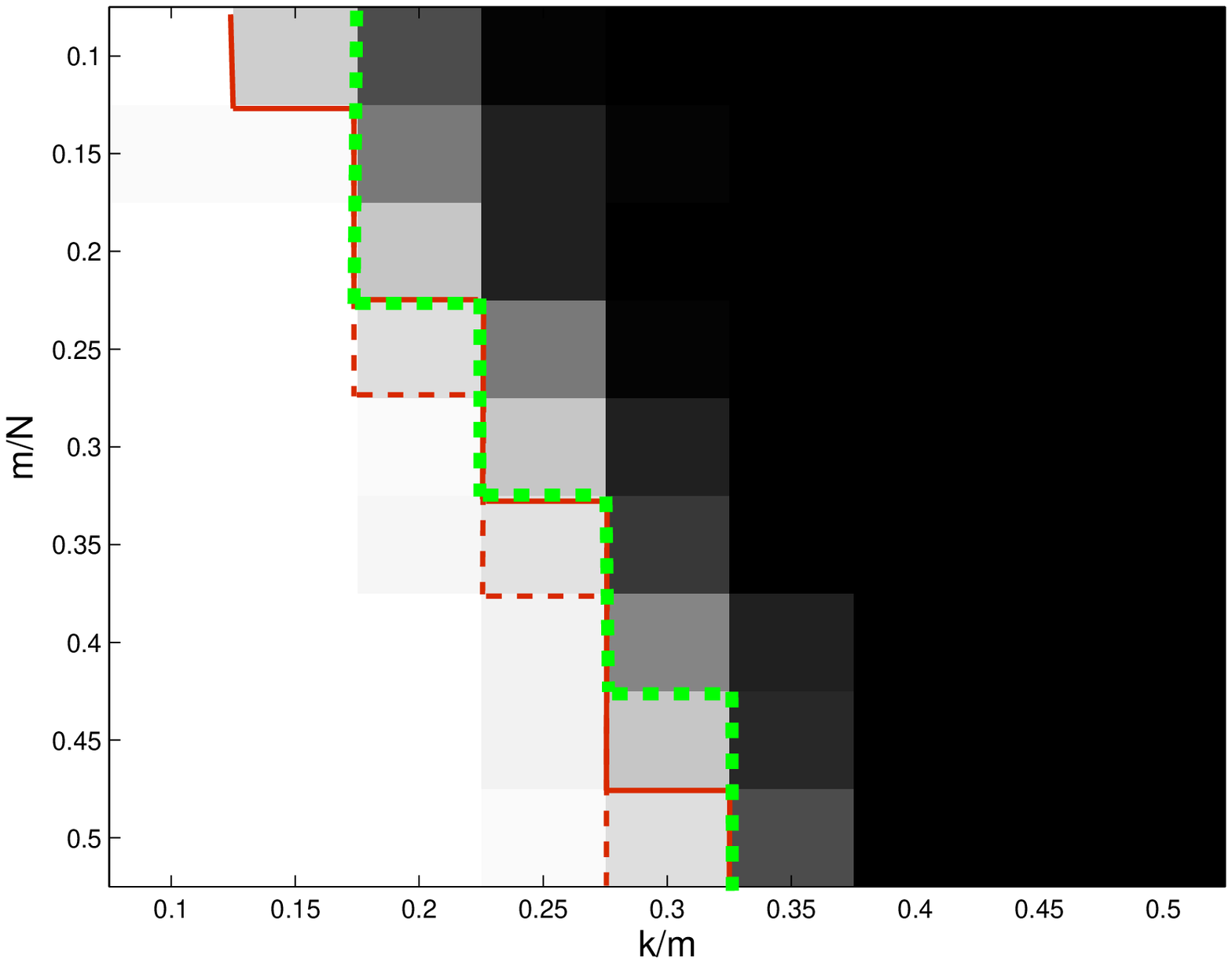}}
	}
	\hbox{
	\subfigure[$\alpha = 0.7$]{\includegraphics[width=3in]{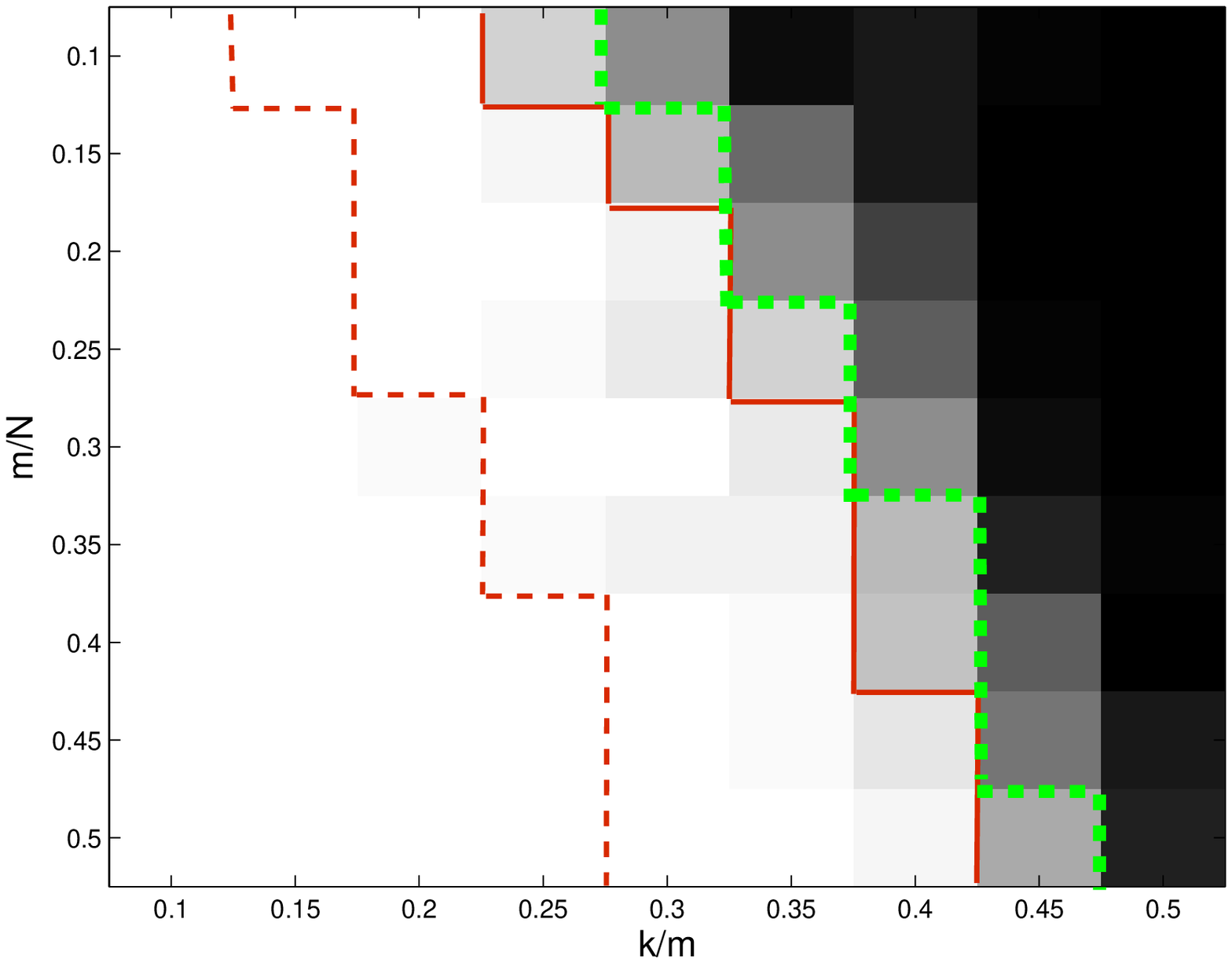}}
	\subfigure[$\alpha = 1$]{\includegraphics[width=3in]{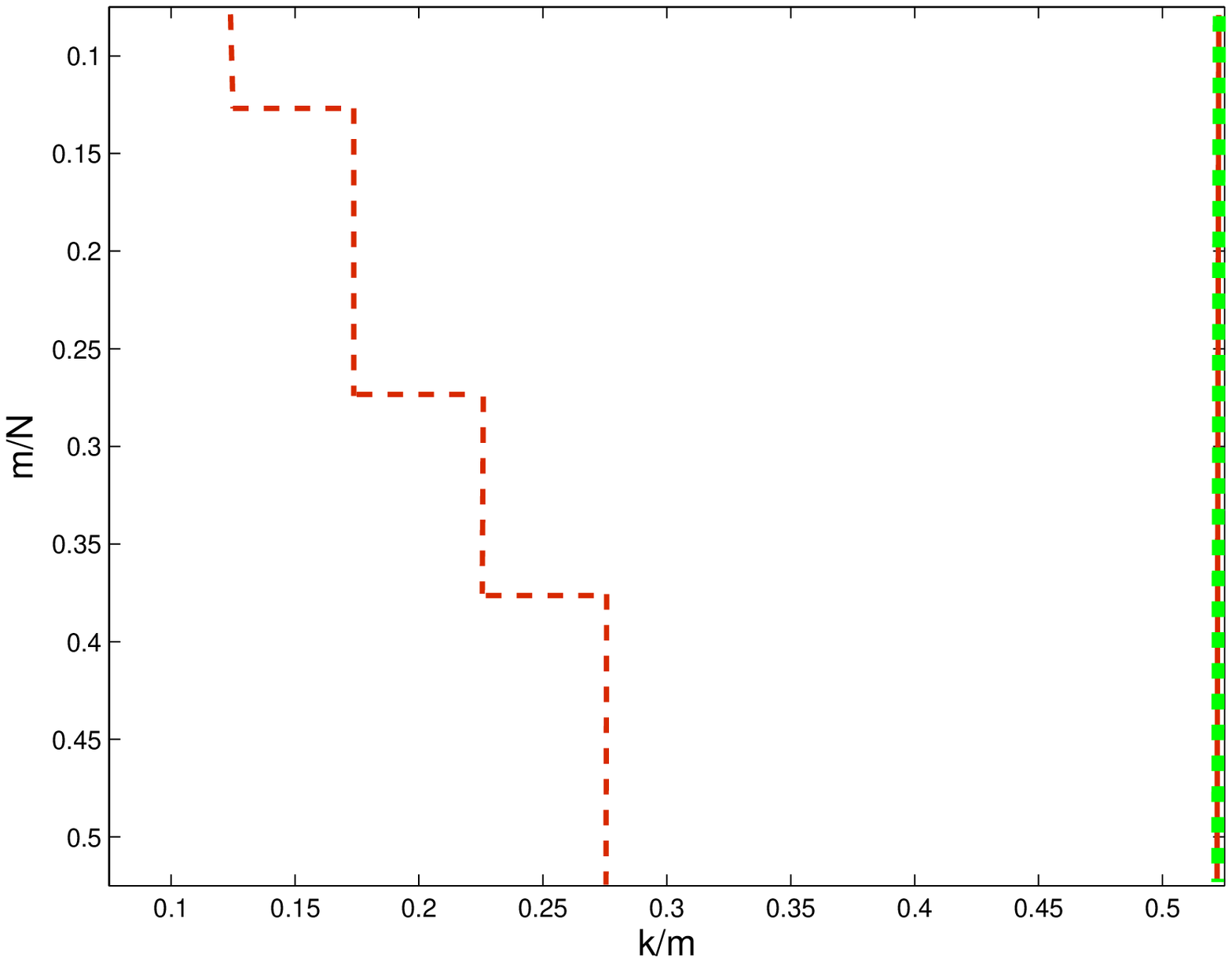}}
	}
	\caption{Phase transition diagrams showing exact recovery rates using weighted $\ell_1$ minimization with weights applied to support estimate sets $\widetilde{T}$ with varying accuracies $\alpha \in \{0.1, 0.3, 0.7, 1\}$. Here, $\w = 1 - \alpha$. The dashed red line corresponds to the empirical 0.85 rate threshold for standard $\ell_1$ minimization. The solid red lines are the 0.85 rate thresholds for weighted $\ell_1$ minimization. For comparison, the dashed green lines correspond to  $m = k + s\log(N/s)$, where $s = (1 + \rho - 2\alpha\rho)k$.}
	\label{fig:PT_wL1}
\end{figure}

We illustrate the phase transitions of weighted $\ell_1$ minimization in Figs.~\ref{fig:PT_wL1} (a)-(d), corresponding to support estimate accuracies of $\alpha \in \{0.1, 0.3, 0.7, 1\}$, respectively. The solid red lines indicate the 0.85 empirical recovery rate thresholds for each of the weighted $\ell_1$ problems. Notice that the recovery thresholds of the weighted $\ell_1$ problems are consistently to the right of the standard $\ell_1$ recovery threshold (dashed red line) for values of $\alpha$ considered. For illustration purposes, we also include the threshold given by $m = k + s\log(N/s)$ (dashed green line) in the figure. This threshold is essentially the best that one can hope to achieve to guarantee exact recovery using weighted $\ell_1$ minimization, as discussed in Section \ref{sec:GaussMat}.

\bibliographystyle{IEEEtran}
\bibliography{wL1}

\end{document}